\date{19 September (2 October) 2013}
\author{H.~M.~Khudaverdian}
\address{School of Mathematics,  University of Manchester, Oxford Road,  Manchester,   M13 9PL,  UK}
\email{khudian@manchester.ac.uk, theodore.voronov@manchester.ac.uk}
\author{Th.~Th. Voronov}
\title[Geometric constructions on the algebra of densities]{Geometric constructions on the algebra of densities}
\newtheorem{theorem}{Theorem}
\newtheorem{proposition}{Proposition}
\newtheorem{corollary}{Corollary}
\theoremstyle{definition}
\newtheorem{definition}{Definition}%[section]
\newtheorem{example}{Example}%[section]
\newtheorem{remark}{Remark}%[section]
\def\co{\colon\thinspace}
\renewcommand{\leq}{\leqslant}
\newcommand{\wed}{\wedge}
\renewcommand{\div}{\mathop{\mathrm{div}}}
\newcommand{\divrho}{\mathrm{\div}_{\!\boldsymbol{\rho}}}
\DeclareMathOperator{\Ber}{Ber} \DeclareMathOperator{\Ker}{Ker}
 \DeclareMathOperator{\sign}{sign}
 \DeclareMathOperator{\grad}{grad}
 \DeclareMathOperator{\Res}{Res}
\DeclareMathOperator{\Vol}{Vol} 
\DeclareMathOperator{\Der}{Der}
\DeclareMathOperator{\Diff}{Diff}
\newcommand{\fun}{C^{\infty}}
\newcommand{\der}[2]{{\frac{\partial {#1}}{\partial {#2}}}}
\newcommand{\lder}[2]{{\partial {#1}/\partial {#2}}}
\newcommand{\RR}{\mathbb R}
\newcommand{\p}{\partial}
\newcommand{\w}{{\mathbf{w}}}
\renewcommand{\a}{\alpha}
\newcommand{\e}{\varepsilon}
\renewcommand{\O}{\Omega}
\newcommand{\D}{\Delta}
\newcommand{\g}{{\gamma}}
\newcommand{\G}{{\Gamma}}
\newcommand{\lt}{\theta} % last term in the master Hamiltonian for long bracket
\newcommand{\la}{{\lambda}}
\renewcommand{\d}{\delta}
\newcommand{\at}{{\tilde a}}
\newcommand{\bt}{{\tilde b}}
\newcommand{\Xt}{{\tilde X}}
\newcommand{\Yt}{{\tilde Y}}
\newcommand{\Pt}{{\tilde P}}
\newcommand{\ps}{{\boldsymbol{\psi}}}
\newcommand{\ph}{{\boldsymbol{\varphi}}}
\newcommand{\ch}{{\boldsymbol{\chi}}}
\newcommand{\X}{{\boldsymbol{X}}}
\newcommand{\Y}{{\boldsymbol{Y}}}
\newcommand{\bXt}{{\tilde \X}}
\newcommand{\bYt}{{\tilde \Y}}
\newcommand{\XX}{\mathfrak{X}}
\newcommand{\YY}{\mathfrak{Y}}
\newcommand{\SSS}{\mathfrak{S}}
\newcommand{\PP}{\mathfrak{P}}
\newcommand{\QQ}{\mathfrak{Q}}
\newcommand{\YYt}{{\tilde \YY}}
\newcommand{\XXt}{{\tilde \XX}}
\newcommand{\rh}{{\boldsymbol{\rho}}}
\DeclareMathOperator{\Vect}{\mathrm{Vect}}
\DeclareMathOperator{\SVect}{\mathrm{SVect}}
\DeclareMathOperator{\Ver}{\mathrm{Vert}}
\newcommand{\A}{\mathfrak{A}}
\DeclareMathOperator{\SA}{\mathrm{S}\mathfrak{A}}
\DeclareMathOperator{\Den}{\mathfrak{F}}
\begin{document}

\begin{abstract}
The  algebra of densities  $\Den(M)$ is a commutative algebra canonically associated with a given  manifold or supermanifold $M$. We introduced this algebra earlier in connection with our studies of Batalin--Vilkovisky geometry. The algebra $\Den(M)$ is graded by real numbers and possesses a natural invariant scalar product. This  leads to important geometric consequences and  applications   to geometric constructions on the original manifold. In particular, there is a classification theorem for derivations of the algebra $\Den(M)$.
It allows {a natural definition of}
bracket operations on vector densities of various weights on a (super)manifold $M$,
similar to how the classical Fr\"{o}licher--Nijenhuis theorem on derivations of the algebra of differential forms leads to the Nijenhuis bracket.  It is possible to  extend  this classification from ``vector fields''   (derivations) on $\Den(M)$   to ``multivector fields''. This leads to the striking result that an arbitrary even Poisson structure on $M$ possesses a canonical lifting to the algebra of densities.  (The latter two statements were obtained by our student A.~Biggs.) This is in sharp contrast with the previously studied case of an odd Poisson structure, where extra data are required for such a lifting.
\end{abstract}

\maketitle

\tableofcontents

\section{Introduction}

The paper is devoted to natural differential-geometric constructions on the algebra of densities, which is a commutative algebra canonically associated with a given  manifold or supermanifold. It is   based on results of  the authors  and includes a recent result due to our   student A.~Biggs.

A \emph{density} of \emph{weight} $\lambda$ is a geometric object which in local coordinates has the form $f(x)|Dx|^{\lambda}$. Here $\lambda$ is an arbitrary real number. We assume that the coefficients $f(x)$ are smooth. It is clear that densities can be multiplied so that their weights are added. The resulting commutative \emph{algebra of densities}, which we denote here by $\Den(M)$, is graded by real numbers and possesses a natural invariant scalar product arising from integration of densities of weight $+1$ (i.e., volume forms).

Densities of weight $0$ are just functions on manifolds, so $\fun(M)$ is contained in $\Den(M)$. Hence $\Den(M)$ has a unit $1$ (a constant function).

The commutative algebra $\mathfrak F(M)$ can be identified with a subalgebra of the algebra  of smooth functions $\fun(\hat M)$ on an ``extended'' manifold $\hat M$,  which is  the total space of a fiber bundle $\hat M\to M$ with   one-dimensional fibers\,\footnote{\,The extended manifold $\hat M$ is sometimes referred to as the ``\,Thomas bundle\,'' of a manifold $M$. In
{the} 1920s, T.~Y.~Thomas studied the construction of $\hat M$ in relation with
{the} projective theory of linear connections on manifolds. To him belongs a striking result~\cite{thomas:projtheory} that projective classes of symmetric linear connections on $M$ are in one-to-one correspondence with linear connections on the manifold $\hat M$.   The  link  with Thomas's work was found by our student J.~George~\cite{jgeorge1, jgeorge2}.}. More precisely, $\hat M$ is the frame bundle for the line bundle $\det TM$. (In the super case we of course must use the notation $\Ber TM$.)  The grading and the  scalar product are peculiar {to}
the subalgebra $\mathfrak F(M)$ and  do not extend to the whole algebra $\fun(\hat M)$. We have inclusions of algebras
\begin{equation*}
    \fun(M)\subset \Den(M) \subset \fun(\hat M)\,.
\end{equation*}

{The invariant   scalar product on the algebra $\Den(M)$ can be alternatively perceived  as a generalized volume element, in the sense of generalized functions~\cite{gelfand:shilov}, on the manifold $\hat M$. As   a linear functional it is defined on the subspace $\Den(M)\subset \fun(\hat M)$ rather than on the whole $\fun(\hat M)$. Nevertheless it can be expressed   by a conventionally-looking analytic formula. } Its existence    leads to important geometric consequences. %\comm{Suggest mention explicitly that `generalized' is in sense of Gelfand-Graev, etc. There is a similar comment later in paper; don't need both.}

Namely, for the Lie algebra of graded vector fields on $\hat M$\,---\,or the derivations of $\Den(M)$\,---\,there arises a canonical divergence operator. In particular, one can speak about the divergence-free derivations. Similarly,  for the  algebra of graded multivector fields on $\hat M$, there exists a canonical odd Laplacian or a Batalin-Vilkovisky type operator, so it is possible to consider those fields that are annihilated by this operator.
From this we can obtain classification theorems and deduce constructions of natural brackets on $M$, as follows.

\begin{itemize}
  \item There is  a one-to-one correspondence between the divergence-free derivations of $\mathfrak F(M)$ of   weight $\lambda\neq 1$ and   the vector densities of the same weight   on $M$. (For weight zero, this is the relation between the Lie derivatives and the corresponding vector fields on $M$.)
  \item A similar statement holds for the  graded multivector fields on $\hat M$ annihilated by the odd Laplacian and  the multivector densities on $M$.
\end{itemize}

One can deduce that  the commutator of vector fields  and the  Schouten bracket  of multivector fields  naturally extend to brackets of vector densities and multivector densities, respectively. Another application of this correspondence is the possibility of lifting  an arbitrary even Poisson structure on $M$ to the algebra of densities $\mathfrak F(M)$, without any additional structure.

These constructions for the algebra of densities have a similarity with the classical Fr\"{o}licher--Nijenhuis   theorem on derivations of the algebra of forms and the construction of the Nijenhuis bracket that follows from it. We can see the Fr\"{o}licher--Nijenhuis   theorem and the constructions for the algebra $\Den(M)$ as two particular instances of    ``second-order geometry'', i.e.,     geometry arising from an  iteration of   natural first-order constructions   {such as, e.g., taking the tangent bundle\,\footnote{{The idea of second-order geometry has been stressed by K.~C.~H.~Mackenzie~\cite{mackenzie:secondorder1, mackenzie:secondorder2} in connection with his studies of double vector bundles, double Lie groupoids and double Lie algebroids.}}.}
Indeed, in the Fr\"{o}licher--Nijenhuis setup the starting point is the   antitangent bundle $\Pi TM$,  for which    functions     are   forms on $M$, while for us here the starting point is the Thomas bundle $\hat M$, the frame bundle of $\Ber TM$, for which  functions   are densities on $M$. Both $\Pi TM$ and $\hat M$ are of the first order   with respect to $M$. We then   study first-order objects  (e.g., derivations) on them, which will be of  the second order relative to $M$. Therefore, classification theorems such as quoted above or that of Fr\"{o}licher--Nijenhuis give  us information about objects living on $M$ through   raids into second-order geometry.

A few words about the origins of the algebra $\Den(M)$ and the related constructions discussed in this paper.

In 1989 (published in 1991), one of us (H.~M.~Kh.)~\cite{hov:deltabest} gave an invariant geometric construction for the odd Laplace-type operator introduced by I.~A.~Batalin and G.~A.~Vilkovisky~\cite{bv:perv, bv:vtor} as the key tool in their quantization method for gauge systems. This construction is of the form ``$\div \grad f$'' where ``$\grad f$'' stands for the hamiltonian vector field of a function $f$ with respect to a given odd symplectic (or odd Poisson) structure and $\div=\div_{\rho}$ is a divergence of vector fields defined by a choice of a volume element, on which therefore the construction depends (the crucial fact here is that, in contrast with the usual case, for an \emph{odd} symplectic structure there is no invariant ``Liouville'' measure). Another odd Laplacian was discovered by H.~M.~Khudaverdian in~\cite{hov:max} (see also~\cite{hov:proclms, hov:semi}). It acts on semidensities on odd symplectic manifolds and   is canonical in the sense that it does not require any additional structure. The odd Laplacian on semidensities is probably even more fundamental for the Batalin--Vilkovisky geometry than the odd Laplacian on functions.

An analysis given by both present authors in~\cite{tv:laplace1} clarified the distinguished role of semidensities for odd Poisson geometry in general (in particular, we discovered there a groupoid property of the Batalin--Vilkovisky equation). This naturally led us  to studying Laplace-like operators on arbitrary densities~\cite{tv:laplace2}. As we discovered in~\cite{tv:laplace2},   {the introduction}
of the algebra $\Den(M)$ and   differential operators (in the algebraic sense) acting on it,  instead of   operators acting on  spaces of densities of isolated weights, allows {a complete classification of} the Batalin--Vilkovisky or Laplace-type operators in this setting. Namely, there is a one-to-one correspondence between the self-adjoint BV-type operators on $\Den(M)$ normalized by the condition $\Delta 1=0$ and the corresponding `brackets' (which arise as their principal symbols). Here the statement  does not depend on whether the operators are even or odd, and the `brackets' in question are symmetric\,\footnote{In the case of odd brackets and odd Laplacians, there is a   further investigation in~\cite{tv:laplace2} of  conditions equivalent to Jacobi identities. Extending the results of~\cite{tv:laplace2} in another direction, a certain `groupoid of connections' generalizing  the Batalin--Vilkovisky groupoid of~\cite{tv:laplace1} was introduced and studied in~\cite{tv:hovgroupoids}.}. Therefore the results equally apply to supermanifolds or ordinary manifolds.

The classification of derivations of the algebra $\Den(M)$ was one of the   results in~\cite{tv:laplace2}. In~\cite{tv:laplace2}, it was somewhat buried under many other important results, in particular, those related   with the second-order operators.  Recently, our student A.~Biggs has generalized this result to the corresponding version of multivector fields~\cite{biggs:lifting}. His point of departure was an  observation  obtained by bare hands  that, unlike \emph{odd} Poisson brackets (the case studied by us in~\cite{tv:laplace2}), \emph{even} Poisson brackets on a manifold or supermanifold naturally   extend to densities without requiring any additional geometric data. The statement about multivector fields on $\hat M$   provides an explanation for this initial beautiful observation.

The purpose of this short survey is to introduce the reader to the algebra of densities $\Den(M)$ and geometric constructions related to it, particularly, the derivations and multivector fields\,\footnote{There is a deep relation between geometry and algebra related with $\Den(M)$ and the studies of intertwining operators for $\Diff M$-modules and equivariant quantization in the works of Duval--Ovsienko~\cite{duval:ovsienko1997} and Lecomte~\cite{lecomte:proj1999}.   See also~\cite{hov:pencils}}.

\section{Preliminaries. Densities and the algebra formed by them}
\subsection{Volume elements}
Let $M$ be a manifold or supermanifold. We shall employ the following notation: $x^a$ denote local coordinates (in the case of a supermanifold, even and odd together) and $Dx$ stands for the coordinate volume element, which under a change of coordinates $x^a=x^a(x')$ transforms as $Dx=(Dx/Dx')\,Dx'$, where $Dx/Dx'$ is the Berezinian of the Jacobi matrix $\lder{x^a}{x^{a'}}$. We shall refer to this Berezinian\footnote{For an ordinary manifold, this is of course the usual determinant.} as to the Jacobian of the coordinate transformation. Let $|Dx|$ be a symbol transforming according to the rule $|Dx|=|Dx/Dx'|\,|Dx'|$, as the notation suggests. (We introduce absolute values in order to avoid difficulties with non-integer powers.)

For practical  calculation   of the effect of changes  of coordinates, it is useful to express  the coordinate volume element $Dx$ as  $[dx]$, i.e., in greater detail, as $[dx^1,\ldots,dx^n]$, for an ordinary manifold (here all the coordinates are even variables) and $[dx^1,\ldots,dx^n\,|\, d\xi^1,\ldots,d\xi^m]$, for a supermanifold (where  the first $n$ coordinates are even and the last $m$ coordinates are odd). Here it is understood that the differentials $dx^a$ form a local frame for  the cotangent bundle\,\footnote{More precisely, the cotangent bundle     $T^*M$ or the anticotangent bundle $\Pi T^*M$, depending on conventions on parity, but this difference is not important here.} and the meaning of the `square bracket' operation is as follows. For a vector space or a free module over a commutative (super)algebra, it is the function of a basis which in the ordinary case is just the full ($n$-fold, $n$ being the dimension) exterior product of the basis elements, $[e_1,\ldots,e_n]=e_1\wed\ldots\wed e_n$; in the super case, it is a symbol which is multiplied by the Berezinian  if the basis undergoes an invertible linear transformation. For calculations, it is sufficient to use the following properties  (which  define the bracket symbol uniquely):
\begin{itemize}
  \item homogeneity: if a basis element is multiplied by an invertible factor, then the bracket is multiplied by the same factor in the power $+1$ for an even basis element and in the power $-1$ for an odd basis element;
  \item invariance under elementary transformations: when a basis element is replaced by the sum with another   element with a coefficient of the appropriate parity.
\end{itemize}
One can quickly learn that it is as convenient to make calculations with the symbol $[dx^1,\ldots,dx^n\,|\, d\xi^1,\ldots,d\xi^m]$ on supermanifolds as with the exterior product $dx^1\wed \ldots \wed dx^n$ on ordinary manifolds.

\subsection{Recollection: densities}
\begin{definition} A (smooth) \emph{density} of \emph{weight} $\la$ is a geometric object which in local coordinates has the form $\psi(x)|Dx|^{\lambda}$. (We assume that the coefficient $\psi(x)$ is smooth.) Equivalently, it is a smooth section of the line bundle $|\Vol(M)|^{\otimes \la}$, where $\la\in\RR$.
\end{definition}

The line bundle $|\Vol(M)|^{\otimes \la}$, by definition, has   local frames $|Dx|^{\la}$ associated with coordinate systems on $M$  with the   transformation law $|Dx|^{\la}=|Dx/Dx'|^{\la}\,|Dx'|^{\la}$.

Notation: $\Den_{\la}(M)= \{\text{all densities of weight $\la$ on $M$}\}$.

\begin{remark}[on the parity of $|Dx|$] There are different conventions as to which parity should be assigned to the coordinate volume element $Dx$ (the natural options are $n$ or $n+m$ modulo $2$ if $\dim M =n|m$). Respectively, depending on the dimension of the supermanifold  and adopted convention, the  `line' bundle $\Vol (M)=\Ber T^*M$ has rank $1|0$ or $0|1$. (The same holds for the dual bundle $\Ber TM=\Vol (M)^*$.) However, there is little choice for the symbol $|Dx|$ if we wish to consider its various powers. We have to agree that $|Dx|$ is always \emph{even}. So the bundles $|\Vol (M)|$ and $\Vol (M)^*$ have rank $1|0$.
\end{remark}

\subsection{Multiplication of densities. The algebra $\Den(M)$}
Densities are multiplied in the obvious way: for $\ps=\psi(x)|Dx|^{\la}$ and $\ph=\varphi(x)|Dx|^{\mu},$
\begin{equation}
    \ps\ph=\psi(x)\varphi(x) |Dx|^{\la+\mu}\,.
\end{equation}
(This corresponds to the natural isomorphism $|\Vol(M)|^{\otimes \la}\otimes |\Vol(M)|^{\otimes \mu}=|\Vol(M)|^{\otimes (\la+\mu)}$\,.)
\begin{definition}
Consider formal finite sums $\sum_{\la} \psi_{\la}(x)|Dx|^{\la}$ and extend the multiplication to them by allowing to open brackets. We arrive at an associative algebra
\begin{equation}\label{algebra}
    \Den(M)=\bigoplus_{\la}\Den_{\la}(M)\,,
\end{equation}
which we call  the \emph{algebra of densities} on a (super)manifold $M$.
\end{definition}
Obvious properties: $\Den(M)$ is a commutative $\RR$-graded algebra;  the algebra of smooth functions $\fun(M)$ is contained in $\Den(M)$ as a subalgebra (functions are of course densities of weight zero); in particular, the algebra $\Den(M)$ has a unit $1$ (a constant function).

\subsection{An invariant scalar product on the algebra $\Den(M)$}
\begin{definition}
For compactly-supported densities $\ps\in\Den_{\la}(M)$, $\ch\in \Den_{\mu}(M)$, define their \emph{scalar product} as
\begin{equation}\label{scalarproductfirst}
    (\ps,\ch):=\begin{cases}
    \int_{M} \ps\ch=\int_M\psi(x)\chi(x)|Dx|\quad &\text{if $\la+\mu=1$}\,,\\
    0 \quad &\text{otherwise}\,.
    \end{cases}
\end{equation}
It is extended by linearity to arbitrary compactly-supported elements of $\Den(M)$.
\end{definition}
Properties: non-degeneracy (which follows from the non-degeneracy of the Berezin integral) and invariance, i.e.,
\begin{equation}
    (\ps\ph,\ch)=(\ps,\ph\ch)\,.
\end{equation}
Therefore
\begin{equation}
    (\ps,\ch)=(\ps\ch,1)\,.
\end{equation}
 The existence of such an invariant scalar product on the algebra of densities leads to important consequences.

{\small

\begin{remark}The possibility to integrate densities of weight $+1$, i.e., objects that in local coordinates have the form $\psi(x)|Dx|$, in the case of supermanifolds requires an orientation condition. Namely, one needs to fix an orientation of the normal bundle $N=N_{M/M_0}$ to the carrier of $M$, in other words, an `orientation in the odd directions'. Recall that the formula for the change of variables in the Berezin integral involves the sign of the determinant of the even-even block of the Jacobi matrix only, therefore on supermanifolds objects of the form $f(x)D_{1,0}x$, not of the form $f(x)|Dx|$, can be integrated without extra orientation conditions. (Here the symbol $D_{1,0}x$ transforms with the factor $\Ber J \cdot \sign \det J_{00}$, where $J$ is the Jacobi matrix.) See more about different orientations in the super case in~\cite{tv:pdf, tv:coh} and \cite{tv:git}. We shall not speak about this subtlety any further.
\end{remark}

}

In view of the invariance property, it is a matter of taste whether to speak about the scalar product $(\ps,\ch)$ of two densities or about the linear form $I(\ps):=(\ps,1)$ on $\Den(M)$, which we shall refer to as to the \emph{formal integral} on $\Den(M)$ and which is just the ordinary integral extended to formal sums of densities of various weights (by setting its value to zero on densities of weight $\neq +1$).

\subsection{Interpretation of densities as functions}
Consider an element $\ps$ of the algebra $\Den(M)$, so that in local coordinates $x^a$,
\begin{equation}
    \ps= \sum\psi_{\la}(x)|Dx|^{\la}\,.
\end{equation}
It is convenient to replace $|Dx|$ by a formal variable $t$, which is assumed to be invertible. In this way we assign a `generating function' to a density $\ps$\,:
\begin{equation}
    \ps(x,t):=\sum\psi_{\la}(x)t^{\la}\,.
\end{equation}
The formal variable $t$ has the following transformation law under a change of coordinates: $t=t' \left|\frac{Dx}{Dx'}\right|$.

Note that the functions of the variable $t$ that we consider here are of a very special type. We  call a function $f(t)$  \emph{pseudo-polynomial} if it is a finite linear combination of powers $t^{\la}$, where the exponents $\la$ can be arbitrary real numbers. (The reader should compare with various classes of symbols arising in the theory of pseudodifferential operators.) The generating functions of the elements of $\Den(M)$ are pseudo-polynomials w.r.t. $t$.

The description of densities by generating functions has  a direct geometric meaning. Namely, the variables $x^a,t$ (where $t\neq 0$) can be considered as local coordinates on an ``extended'' manifold $\hat M$, where $\dim \hat M=\dim M+1$. There is a natural fiber bundle structure $\hat M\to M$. The bundle $\hat M$ is nothing but the frame bundle for the line bundle $|\Ber TM|$ (or $|\det TM|$ in the case of ordinary manifolds). Indeed, the variable $t$ stands for $|Dx|$, which is a basis section of $|\Ber T^*M|$ and, respectively, a linear function on $|\Ber TM|$.  As explained in the Introduction, this manifold plays a role in projective geometry of    linear connections on $M$ and is sometimes referred to as the ``Thomas bundle'' of $M$. We will not explore this relation here.

From now on, we identify densities $\ps$ with their generating functions $\ps(x,t)$. Therefore, densities can be viewed as functions on the manifold $\hat M$. We arrive at an embedding of $\Den(M)$ into $\fun(\hat M)$. It  needs to be stressed that the algebra $\Den(M)$ regarded as a subalgebra of $\fun(\hat M)$ has special properties distinguishing its elements from arbitrary functions on $\hat M$. The key difference is   the existence of grading on $\Den(M)$, which is not defined on the whole of $\fun(\hat M)$. In the following we shall speak about various geometric objects on the manifold $\hat M$, but   always confining ourselves to   graded objects. The reader will see how the possibility to use grading makes a difference.

That means, although we won't make it precise, that  we  actually treat $\hat M$ as a formal `graded manifold' and  consider  the $\RR$-graded algebra  $\Den (M)$ (and \emph{not} the full algebra $\fun(\hat M)$) as its `algebra of functions'. This is a certain departure from the classical viewpoint. In particular, we shall not assign any numerical values to the variable $t$, treating it completely formally and assuming only that $t^{-1}$ makes sense.

Note also, to avoid confusion, that the $\RR$-grading of the algebra $\Den (M)$ has nothing to do with parity and has no influence on the commutativity rules.

\subsection{A convenient expression for the scalar product}
As said, we identify the elements $\ps\in \Den(M)$ with the corresponding functions $\ps(x,t)$. Using that, we can re-write the definition of the scalar product as follows: for (compactly-supported) $\ps, \ch\in \Den(M)$,
\begin{align}
    (\ps,\ch)&=\int_{M}\!|Dx|\;\Res_0 \bigl(t^{-2}\ps(x,t)\ch(x,t)\bigr) \label{scalarproductres}\\
    & = \int_{\hat M} \!|D(x,t)|t^{-2}\; \ps(x,t)\ch(x,t)\,. \label{scalarproductformint}
\end{align}
Here step~\eqref{scalarproductres} is obvious: taking the residue at zero in $t$ (after adjusting the powers by dividing by $t^2$) serves to single out the term of weight $+1$ from our formal sum.

As for the next step~\eqref{scalarproductformint}, here the residue  at zero followed by the integration over $M$ is interpreted as a (formal) integral over the graded manifold $\hat M$.
This requires more  {explanation, which we provide} below.

\subsection{The invariant scalar product on $\Den(M)$ as a generalized volume element on the graded manifold $\hat M$}

Let us consider the expression $|D(x,t)|t^{-2}$ from the viewpoint of its transformation law. By writing $D(x,t)t^{-2}=[dx,dt] t^{-2}$ we obtain
\begin{multline*}
    [dx,dt]=[\,dx^{a'}\der{x}{x^{a'}}\,, dx^{a'}\der{t}{x^{a'}}+ dt' \der{t}{t'}\,]=
    [\,dx', dx^{a'}\der{t}{x^{a'}}+ dt' \der{t}{t'}\,]\, \Ber \der{x}{x'}=\\
    [\,dx', dt' \der{t}{t'}\,]\, \Ber \der{x}{x'}=
    [dx', dt']\;\der{t}{t'}\,\, \Ber \der{x}{x'}   =   [dx', dt']\,\Bigl(\Ber \der{x}{x'}\Bigr)^{\!2}\,,
\end{multline*}
since
\begin{equation*}
    t=t'\, \Ber \der{x}{x'}\,,
\end{equation*}
and therefore
\begin{equation}
    [dx,dt]t^{-2}= [dx', dt'] t'^{-2}\,.
\end{equation}
Hence $D(x,t)t^{-2}$ (and even more so $|D(x,t)|t^{-2}$) is invariant under changes of coordinates. We conclude that the manifold $\hat M$ is endowed with a canonical volume element, which is expressed as $|D(x,t)|t^{-2}$ in arbitrary local coordinates $x^a,t$ on $\hat M$.

Now, let us compare the volume element on $\hat M$ just obtained with the formal integral $I(\psi)$ or the invariant scalar product $(\psi,\chi)$ on the algebra
$\Den (M)\subset \fun(\hat M)$.
We claim that
these structures are equivalent. This equivalence manifests itself in two ways.

On one hand, as claimed in the previous subsection, there is a way of identifying the scalar product  or the formal integral of densities $I(\psi)$ with an integral against the volume element $|D(x,t)|t^{-2}$. This we will explain in a moment.

Indeed, consider functions of $t$ where $t$ is invertible. In which sense can they   be integrated against  {$(Dt)\,t^{-2}$}. A simpler question is about integration against $Dt$. As a guiding principle we take the requirement  that $\int\! Dt \,\der{\ps}{t}=0$. On the class of pseudo-polynomials we immediately conclude that  $\int\! Dt \, \ps(t)$ must be proportional to the residue of $\ps(t)$   at $0$ (since all  powers of $t$ except for $t^{-1}$ are derivatives of  functions in this class). Thus it is legitimate to set $\int\! Dt :=\Res_0$. Combining this with the ordinary integration over $M$,  we arrive   at the expression for the scalar product in the form~\eqref{scalarproductformint} {for the elements of $\Den (M)$.}

On the other hand, we defined in~\cite{tv:laplace2}  a canonical  divergence of graded vector fields on $\hat M$ (i.e, derivations of $\Den(M)$)  with the help of the invariant scalar product on $\Den(M)$;  as  we shall later see, this divergence is given by the standard formula (for  a divergence relative a volume element) if one uses the volume element $|D(x,t)|t^{-2}$.

We speak of $|D(x,t)|t^{-2}$ as of a `generalized' volume element because the integration against it is defined in a formal fashion as a linear functional on a particular space of functions. (In the case interesting for us, this is the space of pseudo-polynomials in $t$.)

\begin{remark} If we introduce a new variable $u$  so that $u=t^{-1}$, the volume element $|D(x,t)|t^{-2}$ takes the simple form $|D(x,u)|$. Instead of the expansion  over $t$, we may use the expansion over $u$, as
\begin{equation}
    \ps(x,u)=\sum \frac{\psi_{\la}(x)}{u^{\la}} \quad \text{for} \quad \ps= \sum \psi_{\la}(x)|Dx|^{\la}\,.
\end{equation}
The geometric meaning of $u$, which corresponds to $|Dx|^{-1}$,  is of course the local basis element for $|\Ber TM|=\Vol (M)^*$.
\end{remark}

\subsection{Constructions with $\Den(M)$\,: algebraic and geometric viewpoints}
Individual spaces of densities   $\Den_{\la}(M)$ for particular weights are standard objects of study in mathematics. The novelty introduced by our approach is not in taking all $\Den_{\la}(M)$   together, but in exploiting the commutative algebra structure on their direct sum~\eqref{algebra}
\begin{comment}\begin{equation*}
    \Den (M)=\bigoplus_{\la} \Den_{\la}(M)\,,
\end{equation*}
\end{comment}
and, in particular, using  the invariant scalar product on $\Den(M)$.  Considering the algebra $\Den(M)$ instead of individual spaces  of densities  {prompts the application of} standard algebraic notions such as differential operators and derivations.
Besides that, we may use the geometric interpretation of the elements of the commutative algebra $\Den(M)$ as functions on the graded manifold $\hat M$. This allows  {us} to view algebraic constructions on  $\Den(M)$ as differential-geometric constructions on $\hat M$.

The ultimate goal of course is to  re-interpret the outcome  in terms of objects on the initial (super)manifold $M$.

\subsection{Example: differential operators on $\Den(M)$}
Recall that differential operators can be introduced algebraically. For a commutative algebra, they are defined by induction. The operators of order zero are those commuting with the multiplication operators. The operators of order $\leq n+1$ are those commuting with the multiplication operators modulo operators of order $\leq n$. (This can be generalized to
non-commutative and
non-associative settings, but we do not need  {to do} that.) For differentiable manifolds or supermanifolds, this reproduces the usual notion.

Differential operators on the algebra $\Den(M)$, defined algebraically, may be treated  as (graded) differential operators
in the usual sense on the manifold $\hat M$. Working in local coordinates, we may expand them in partial derivatives $\lder{}{x^a}$ and $\lder{}{t}$. Note that the derivatives $\lder{}{x^a}$ do not change  weights of objects, but the derivative $\lder{}{t}$ decreases weights by one. Hence   it is more convenient instead of $\lder{}{t}$ to consider the operator $\w=t\lder{}{t}$, which has weight zero. (Since $t$ is invertible, passing from $\lder{}{t}$ to $t\lder{}{t}$ is harmless.) The operator $\w$ is nothing but the \emph{weight operator}, with the eigenvalue $\la$ on the eigenspace $\Den_{\la}(M)$.

Therefore, a differential operator  of degree  $N$ and weight $\mu$ on the algebra $\Den(M)$ (or, equivalently, on the manifold $\hat M$) has the local expression
\begin{equation}\label{operator}
    L=\sum_{|\a|+k\leq N} a_{\a k}(x,t)\partial_{x}^{\,\a}\w^k
\end{equation}
where $\a$ is a multi-index and the shorthand $\partial_{x}^{\,\a}$ has the usual meaning. All the coefficients $a_{\a k}(x,t)$ have weight $\mu$.

Note that in~\eqref{operator}, $\w$ commutes with $\partial_{x}^{\,\a}$, but not with the coefficients $a_{\a k}(x,t)$.

It will be helpful to recall the behavior of partial derivatives under a change of local coordinates. If on $\hat M$, $x^a=x^a(x')$ and   $t= t'\, J^{-1}$, where $J=|Dx'/Dx|$, then
\begin{align*}
    \der{}{x^a}&= \der{x^{a'}}{x^a}\, \der{}{x^{a'}} + t'\,J^{-1}\p_aJ\, \der{}{t'}\,,\\
    \der{}{t} &= J \, \der{}{t'}\,.
\end{align*}
By re-writing  {these}
in terms of $t\lder{}{t}$ and $t'\lder{}{t'}$, we obtain finally
\begin{align}
    \der{}{x^a}&= \der{x^{a'}}{x^a}\, \der{}{x^{a'}} +  (\p_a\ln J)\, t'\der{}{t'}\,, \label{translawforda}\\
    t\der{}{t} &= t'\der{}{t'} \equiv \w\,.\label{translawfordt}
\end{align}
We see once again the invariance of the operator $\w$ and we also observe that the partial derivative $\p_a=\lder{}{x^a}$ has an additional term in the transformation law, which is proportional to  $\w$.

\subsection{Operators on $\Den(M)$ and operator pencils. Adjoint operator}
When a differential operator $L$  of weight $\mu$ on the algebra $\Den(M)$,
 \begin{equation}
    L\co \Den(M)\to \Den(M)\,,
 \end{equation}
is restricted to the subspaces $\Den_{\la}(M)$,  with varying $\la$,  it translates into a one-parameter family of differential operators on $M$ or an `operator pencil'
\begin{equation}
   L_{\la}\co \Den_{\la}(M)\to \Den_{\la+\mu}(M)\,.
\end{equation}

The operator $\w$ becomes $\la$ upon restriction on $\Den_{\la}(M)$.
Note also the transformation law
\begin{equation}\label{translawonlambda}
    \p_a = \p_ax^{a'}\, \p_{a'} + \la \, \p_a\ln J
\end{equation}
on $\Den_{\la}$, as follows from~\eqref{translawforda}.

\begin{example}
Operators of   order $\leq 1$   on $\Den(M)$ correspond to linear  pencils:
\begin{equation}
    L_{\la}=|Dx|^{\mu}\Bigl(A^a(x)\,\p_a +  B(x) + \la\, C(x)\Bigr)= L^{(0)}+\la\, L^{(1)}\,.
\end{equation}
Here $L^{(0)}=|Dx|^{\mu}\Bigl(A^a(x)\,\p_a +  B(x)\Bigr)$, obtained by setting $\la=0$, makes sense as a usual differential operator of order $\leq 1$   {sending}
functions to $\mu$-densities. (The part $|Dx|^{\mu}\,A^a(x)\,\p_a$ is a vector density of weight $\mu$ and the part $|Dx|^{\mu}B(x)$ is a scalar density.) In contrast with that, the coefficient $L^{(1)}$ does not have an independent meaning and its transformation law involves $A^a$. (One can see that in new coordinates $C'=J^{-\mu}(C+A^a\p_a\ln J)$, by using~\eqref{translawonlambda}.)
\end{example}

\begin{example}
Operators of   order   $\leq 2$ on $\Den(M)$ correspond to   quadratic pencils:
\begin{equation}
    L_{\la}=L^{(0)} + \la L^{(1)} + \la^2 L^{(2)}\,,
\end{equation}
where $L^{(0)}$ is an operator of order $\leq 2$ in the ordinary sense, $L^{(1)}$ contains only the first derivatives in $x^a$, and $L^{(2)}$ does not contain differentiation in $x^a$. Like the above, $L^{(0)}$ can be seen as a differential operator of order $\leq 2$ on $M$  acting from functions to $\mu$-densities, while the operator coefficients $L^{(1)}$ and $L^{(2)}$ are coordinate-dependent and their transformation law involves $L^{(0)}$.
\end{example}

Due to the existence of the canonical scalar product on $\Den(M)$, it makes sense to speak about the \textbf{adjoint operator}  $L^*$ for an operator $L\co \Den(M)\to \Den(M)$. For example, $\w^*=1-\w$. (We shall use this later on.)

% \comml{End of K's comments for September 29, 2013}

\section{Derivations of the algebra $\Den(M)$ and their classification}

\subsection{General form of a vector field on $\hat M$}
From now on without further indication we shall consider only graded geometric objects on the graded manifold $\hat M$: either homogeneous in the sense of weight or finite sums of homogeneous objects. We shall use the letter $w$ for the weight of a homogeneous object, viz., $w(\ps)=\la$ if $\ps\in \Den_{\la}(M)$.

Recall that we have introduced the vector field $\w=t\lder{}{t}$, which has the same form in all coordinates systems $x^a,t$. We have $w(\w)=0$ and $\w \ps = w(\ps) \ps$ for a homogeneous density $\ps$.

A arbitrary vector field $\X$ on $\hat M$ (= a derivation of $\Den(M)$) has the following general form in local coordinates:
\begin{equation}\label{vectfield}
    \X=X^a(x,t)\,\der{}{x^a}+ X^0(x,t)\,\w\,.
\end{equation}
We have $w(\X)=w(X^a)=w(X^0)$.

We use boldface to distinguish vector fields on $\hat M$ from vector fields on the original manifold $M$ (which we denote with the usual font).

\begin{example} Consider a vector field $X$ on $M$. It generates transformations of all geometric objects on $M$ including densities. Since the multiplication of densities is preserved by diffeomorphisms of $M$, the corresponding infinitesimal generator, which is the Lie derivative $L_X$, is a derivation of the algebra $\Den(M)$. Equivalently, the Lie derivative $L_X$ is a vector field on $\hat M$. If $X=X^a(x)\der{}{x^a}$, we have
\begin{equation}\label{lieder}
    L_X=X^a(x)\,\der{}{x^a} + \p_aX^a(-1)^{\at(\Xt+1)}\,\w\,.
\end{equation}
Note that $w(L_X)=0$\,.
\end{example}

We shall shortly introduce a large class of vector fields on $\hat M$ generalizing the Lie derivatives $L_X$.

\subsection{Canonical divergence}% of  vector fields on $\hat M$}
Due to the existence of the invariant scalar product on $\Den(M)$ (or the invariant volume element $|D(x,t)|t^{-2}$ on $\hat M$), there is a canonical {divergence} operator $\div$ on $\Vect(\hat M)$\,,
\begin{equation}\label{divergence}
    \div \co \Vect(\hat M)\to \Den(M)\,.
\end{equation}
There are two equivalent ways of defining the operation $\div$. One is in an algebraic fashion.  %making use of the invariant scalar product on the algebra $\Den(M)$.
(This is how we introduced it in~\cite{tv:laplace2}.) The other is by a standard differential-geometric formula.%, where  the canonical volume element on $\hat M$ is used.
We shall describe both approaches and show that they give the same.

Let us recall some general facts.

From an algebraic viewpoint, an abstract  `divergence' or `divergence operator' for a commutative (super)algebra $A$ (see Koszul~\cite{koszul:crochet85} and Kosmann-Schwarzbach \& Monterde~\cite{yvette:divergence}) is an even linear mapping $\d\co \Der A\to A$ satisfying
\begin{equation*}
    \d(aX) = a\,\d(X) + (-1)^{\at\Xt} X(a)\,,
\end{equation*}
where $a\in A$, $X\in \Der A$. Such divergence operators are in one-to-one correspondence with covariant derivatives (Koszul connections) on the module of volume forms for $A$. The   formula linking the two notions is
\begin{equation}
    \d(X)= - \nabla_X + L_X\,,
\end{equation}
where at the l.h.s., $\d(X)\in A$ is a divergence of $X\in \Der A$ and at the r.h.s., $\nabla_X$ and $L_X$ are a covariant derivative and the Lie derivative along $X$, respectively.
\medskip

{\small
Indeed, ignoring   difficulties related with possible infinite-dimensionality, we may treat a volume form $\xi$ for $A$ as an element of the dual space, which we denote $A'$. An element  $X\in \Der A$ defines the Lie derivative $L_X$ on $A'$  by the formula $\langle L_X \xi, a\rangle= - \langle  \xi, X(a)\rangle (-1)^{\tilde\xi\at}$. From here one can see that on volume forms the Lie derivative satisfies
\begin{equation*}
    L_X(a\xi)=X(a)\xi+(-1)^{\Xt\at}aL_X\xi\,,
\end{equation*}
as it should, and also an additional identity
\begin{equation*}
    L_{aX}(\xi)=(-1)^{\at\Xt} L_X(a\xi)=
   aL_X\xi+ (-1)^{\at\Xt} X(a)\xi\,.
\end{equation*}
Suppose $\nabla$ is a connection on the $A$-module $A'$ defined by the usual axioms for $\nabla_X$. Then the differential operator $\d(X):=-\nabla_X + L_X$ on $A'$ satisfies
\begin{align*}
    \d(X)(a\xi)&=(-1)^{\Xt\at} a\,\d(X)(\xi)\,, \quad \text{and} \\
    \d(aX)(\xi)&=a\,\d(X)(\xi) + (-1)^{\at\Xt} X(a)\xi\,.
\end{align*}
Hence it is an operator of order zero and if we could show that it is the operator of multiplication by an element of $A$, which we may also denote $\d(X)$, this will define a divergence $\Der A\to A$. An operator $T$ of order zero on $A'$ defines the dual operator $T'$ acting on $A''$, which is also of order zero; if $A=A''$ and assuming that there is a unit $1\in A$, we may conclude that the dual operator and hence the original operator on $A'$ are operators of multiplication by an element $a_T\in A$ defined from the identity $\langle \xi, a_T \rangle= \langle T(\xi), 1\rangle (-1)^{\tilde T\tilde\xi}$.  In our particular case   the identity takes the form  $\langle \xi, \d(X) \rangle= \langle -\nabla_X(\xi), 1\rangle (-1)^{\tilde X\tilde\xi}$ (because $L_X'=-X$ kills $1$). The difficulty is that this identity may not define an element of $A$. However, if it does, this is the required mapping from connections on $A'$ to divergence operators. The mapping in the opposite direction is obtained by setting $\nabla_X:= L_X - \d(X)$ on $A'$ for a given divergence operator $\d$. Then the axioms of a covariant derivative are immediately satisfied.

}

\smallskip
In a differential-geometric setup, when $A=\fun(M)$  for a (super)manifold $M$, the one-to-one correspondence between divergences on $\Vect(M)$ and connections in the line bundle $\Vol(M)$ is expressed by the   formulas:
\begin{equation}\label{divgamma}
    \d(X)= (\p_a +\g_a)X^a (-1)^{\at(\Xt+1)} \,,
\end{equation}
for a divergence of vector fields, and
\begin{equation}%\label{ }
    \nabla_X \bigl(\rho\,Dx\bigr)=  X^a(\p_a-\g_a)\rho  \,Dx\,,
\end{equation}
for a covariant derivative of volume forms,
where  the coefficients $\g_a$ are the same\,\footnote{\,Here  $-\g_a$ are the connection coefficients for a connection in   $\Vol (M)$. In~\cite{tv:laplace2}, we used the opposite sign for $\g_a$. If $\G_{ab}^c$ are the Christoffel symbols of a linear connection on $M$, one can choose $\g_a=\G_{ab}^b(-1)^{\bt(\at+1)}$.}.

Suppose now that  the algebra $A$ is endowed with   an invariant scalar product. This permits an identification of the vector spaces $A'$ and $A$ as   $A$-modules. The Lie derivative   $L_X$ becomes under this identification the negative of the adjoint operator  w.r.t. the given scalar product, $L_X=-X^*$. Now, as a  covariant derivative $\nabla_X$ one can take the standard action of $X\in \Der A$. In this way we arrive at the formula
\begin{equation} \label{divfromscprod}
    \d(X)= -(X+X^*)
\end{equation}
used in~\cite{tv:laplace2}. One can check   that  in the differential-geometric setup,  if a scalar product on $\fun(M)$ for a supermanifold $M$ is defined as
\begin{equation*}
    (f,g)=\int_M\! |Dx|\, \rho(x)\, f(x)g(x)\,,
\end{equation*}
for some  volume element $\rh=\rho(x)|Dx|$, then the divergence defined by~\eqref{divfromscprod} is given by the familiar expression $\d(X)= \divrho X$\,,
\begin{equation} \label{divrho}
    \divrho X=\frac{1}{\rho}\,\p_a(\rho X^a)(-1)^{\at(\Xt+1)}\,.
\end{equation}
(This is of course a special case of~\eqref{divgamma}, with $\g_a=\p_a\ln \rho$.)

Therefore the algebraic approach to divergence  is equivalent to the usual  differential-geometric concept when both are applicable.

We shall apply these ideas to our graded manifold $\hat M$ and the algebra $\Den(M)$ regarded as the algebra of functions on $\hat M$. The algebra $\Den(M)$ possesses   the canonical invariant scalar product given by~\eqref{scalarproductfirst},\eqref{scalarproductres},\eqref{scalarproductformint}. %The algebraic approach leads to the following definition.

\begin{definition}[\cite{tv:laplace2}]
The (canonical) \emph{divergence} of a graded vector field $\X\in \Vect \hat M$ is
\begin{equation}\label{divcan}
    \div \X := -(\X+\X^*)\,,
\end{equation}
where $\X^*$ is the adjoint of $\X$ regarded as a differential operator on the algebra $\Den(M)$ w.r.t. the canonical scalar product on $\Den(M)$.
\end{definition}
(The r.h.s. of~\eqref{divcan} is  an operator of zeroth order, hence an element of $\Den(M)$\,.)

Let us calculate an explicit formula for the divergence. Suppose $\X\in \Vect (\hat M)$ in local coordinates is given by~\eqref{vectfield}. Calculating the adjoint operator, we arrive at
\begin{equation*}
    \X^*\ps= -\p_a( X^a \ps)(-1)^{\at(\bXt+\at)}+(1-\w)(X^0\ps)
\end{equation*}
(recall that $\p_a^*=-\p_a$ and $\w^*=1-\w$ as already noted). Expanding this further  gives
\begin{equation*}
    \X^* = -\p_aX^a(-1)^{\at(\bXt+\at)}- X^a\p_a   +(1-\w)(X^0) - X^0\w\,.
\end{equation*}
Therefore we have arrived at the following statement.
\begin{proposition}
The canonical divergence on vector fields on $\hat M$ (or derivations of $\Den(M)$) in local coordinates is given by the formula
\begin{equation}\label{divexplic}
\boxed{\
    \div \X =  \p_aX^a(-1)^{\at(\bXt+1)} + (\w-1)(X^0)\,.
    \vphantom{\der{}{y}}}
\end{equation}
\end{proposition}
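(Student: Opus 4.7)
The plan is to compute $\X^*$ explicitly in local coordinates and then apply the defining formula $\div\X = -(\X+\X^*)$. Three ingredients suffice. First, multiplication by any density $f\in\Den(M)$ is self-adjoint on $\Den(M)$, $f^* = f$; this is exactly the invariance property $(\ps\ph,\ch) = (\ps,\ph\ch)$ of the canonical scalar product. Second, $\p_a^* = -\p_a$, which follows from integration by parts, because the canonical volume element $|D(x,t)|t^{-2}$ on $\hat M$ is independent of $x^a$. Third, $\w^* = 1 - \w$, already noted in the text; it can be verified by integrating $(\w\ps,\ch)$ against $|Dt|\,t^{-2}$ and using $\p_t(t^{-1}\ch) = -t^{-2}\ch + t^{-1}\p_t\ch$.

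With these in hand, I apply the super composition rule $(AB)^* = (-1)^{\tilde A\tilde B}B^*A^*$ to each summand of $\X = X^a\p_a + X^0\w$. Since $X^a$ and $X^0$ act as self-adjoint multiplication operators, the adjoint takes the form
\begin{equation*}
    \X^*\ps = -\p_a(X^a\ps)(-1)^{\at(\bXt+\at)} + (1-\w)(X^0\ps).
\end{equation*}
Expanding the outer $\p_a$ by the super Leibniz rule, and using the derivation property of $\w$, splits $\X^*$ into a first-order piece equal to $-\X$ and a residual order-zero piece.

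Consequently $-(\X+\X^*)$ has no first-order terms: this is forced by the general theory, since a divergence is always an operator of order zero (multiplication by an element of $\Den(M)$), and the cancellation provides a consistency check. What survives is precisely the boxed formula. The only real obstacle is the bookkeeping of Koszul signs in the super setting; in particular, the factor $(-1)^{\at(\bXt+\at)}$ appearing in $\X^*$ converts to $(-1)^{\at(\bXt+1)}$ in the final answer via $\at^2 \equiv \at \pmod 2$. Beyond this the argument is routine.
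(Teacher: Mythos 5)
Your proof is correct and follows essentially the same route as the paper's: compute $\X^*$ from $\p_a^*=-\p_a$, $\w^*=1-\w$ and the self-adjointness of multiplication operators, arriving at exactly the paper's intermediate expression $\X^*\ps= -\p_a( X^a \ps)(-1)^{\at(\bXt+\at)}+(1-\w)(X^0\ps)$, and then read off the zeroth-order operator $-(\X+\X^*)$. The sign conversion via $\at^2\equiv\at \pmod 2$ and the observation that the first-order terms must cancel are exactly what the paper does (the latter implicitly).
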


\begin{corollary} Suppose $\X\in \Vect(\hat M)$ has weight $\mu$ and in local coordinates
\begin{equation}\label{vectfieldmu}
    \X=t^{\mu}\bigl(X^a(x)\,\p_a+ X^0(x)\,\w\bigr)\,.
\end{equation}
%(so $X^a$ and $X^0$ have a slightly different meaning from the above).
Then
\begin{equation}\label{divexplicmu}
    \div \X =  t^{\mu}\left(\p_aX^a(-1)^{\at(\bXt+1)} + (\mu-1)X^0\right)\,.
\end{equation}
\end{corollary}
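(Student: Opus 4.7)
The plan is to obtain the Corollary by direct substitution into the general formula~\eqref{divexplic} from the preceding Proposition, applied to the particular form~\eqref{vectfieldmu}. In the notation of~\eqref{vectfield}, the coefficients are $X^{a}(x,t)=t^{\mu}X^{a}(x)$ and $X^{0}(x,t)=t^{\mu}X^{0}(x)$, and the parity of $\X$ coincides with that of $X^{a}(x,t)\p_{a}$ (since $t^{\mu}$ is even under the standing convention that $|Dx|$ is even). So I would just track what~\eqref{divexplic} gives on each of the two summands.

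First, I would compute the $\p$-term. Since the coordinates $x^{a}$ and $t$ on $\hat{M}$ are independent, $\p_{a}t=0$, and $t^{\mu}$ is even, hence commutes past $\p_{a}$ without sign. Therefore
\begin{equation*}
\p_{a}\bigl(t^{\mu}X^{a}(x)\bigr)(-1)^{\at(\bXt+1)} = t^{\mu}\,\p_{a}X^{a}(x)\,(-1)^{\at(\bXt+1)}\,,
\end{equation*}
which is the first term on the right-hand side of~\eqref{divexplicmu}.

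Next, I would evaluate $(\w-1)(X^{0}(x,t))$. Since $\w=t\lder{}{t}$ is the weight operator and $X^{0}(x)$ does not depend on $t$, the function $t^{\mu}X^{0}(x)$ is homogeneous of weight $\mu$, giving $\w\bigl(t^{\mu}X^{0}(x)\bigr)=\mu t^{\mu}X^{0}(x)$ and consequently $(\w-1)\bigl(t^{\mu}X^{0}(x)\bigr)=(\mu-1)t^{\mu}X^{0}(x)$. Summing the two contributions and factoring out $t^{\mu}$ produces~\eqref{divexplicmu}.

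There is no real obstacle here: the Corollary is a direct specialization of the Proposition to fields of pure weight $\mu$ with $x$-dependent coefficients. The only point that merits a moment of care is the sign bookkeeping when pulling $t^{\mu}$ through $\p_{a}$, which is trivial once one recalls that $t^{\mu}$ is even; all other signs in~\eqref{divexplic} are then inherited unchanged.
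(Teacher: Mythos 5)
Your proposal is correct and is exactly the intended argument: the paper states the Corollary without proof as an immediate specialization of formula~\eqref{divexplic}, and your substitution $X^{a}(x,t)=t^{\mu}X^{a}(x)$, $X^{0}(x,t)=t^{\mu}X^{0}(x)$ together with $\p_{a}t=0$ and $\w(t^{\mu}X^{0})=\mu\, t^{\mu}X^{0}$ is precisely what is needed. The sign bookkeeping you note (that $t^{\mu}$ is even, so it passes through $\p_{a}$ and the factor $(-1)^{\at(\bXt+1)}$ unchanged) is the only subtlety, and you handle it correctly.
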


\begin{remark}
Since the scalar product on $\Den(M)$ corresponds to an  invariant volume element on $\hat M$, we should expect, from the general theory outlined above, that the `algebraic' definition of  divergence on $\Vect(\hat M)$   as $-(\X+\X^*)$ gives the same answer as  the   `differential-geometric' definition based on  $|D(x,t)|t^{-2}$. That this is indeed true can be verified directly: denoting temporarily the divergence of $\X$ w.r.t. the volume element $|D(x,t)|t^{-2}$ by $\div{'}$, we obtain
\begin{multline*}
    \div{'} \X= \frac{1}{t^{-2}}\left(\der{(t^{-2}X^a(x,t))}{x^a}(-1)^{\at(\bXt+1)}+\der{(t^{-2}tX^0(x,t))}{t}\right)=\\
    \p_a X^a(-1)^{\at(\bXt+1)} + t^2 \der{}{t}(t^{-1}X^0)=
    \p_a X^a(-1)^{\at(\bXt+1)} + t^2\left(-t^{-2}X^0 + t^{-1}\der{X^0}{t}\right)=\\
    \p_a X^a(-1)^{\at(\bXt+1)} +  \left(- X^0 + t \der{X^0}{t}\right) =
    \p_a X^a(-1)^{\at(\bXt+1)} +  \left(\w-1\right)(X^0)\,,
\end{multline*}
which coincides with~\eqref{divexplic}.
\end{remark}

\begin{example}
Consider a vector field on $\hat M$ of the form $\X=L_X$, where $X\in\Vect(M)$. By~\eqref{lieder}, we have
\begin{equation*}
    \X=X^a(x)\,\p_a + \p_aX^a(-1)^{(\Xt+1)\at}\,\w\,,
\end{equation*}
so $X^0=\p_aX^a(-1)^{(\Xt+1)\at}$\,. Substituting that into~\eqref{vectfieldmu}, with $\mu=0$, we arrive at
\begin{equation*}
    \div L_X =   \p_aX^a(-1)^{\at(\bXt+1)} -X^0= \p_aX^a(-1)^{\at(\bXt+1)} -\p_aX^a(-1)^{(\Xt+1)\at}=0 \,.
\end{equation*}
We see that for the Lie derivative $L_X$ along an arbitrary vector field $X\in \Vect(X)$ on $M$, the canonical divergence $\div L_X$ automatically vanishes. This is an expression of the invariance of the scalar product on $\Den(M)$ and the volume element $|D(x,t)|t^{-2}$ on $\hat M$ under the diffeomorphisms of $M$\,.
\end{example}

\begin{example}\label{divvert}
Consider a `vertical' vector field $\X=X^0(x,t)\w$. (Note that here the coefficient $X^0$ is invariant under changes of coordinates.)
Then $\div \X= (\w-1)(X^0)$\,. In particular, if $w(\X)=\mu$, so that $\X=t^{\mu}X^0(x)\w$, we obtain $\div \X=(\mu-1) t^{\mu}X^0(x)$ and thus $t^{\mu}X^0(x)=\frac{1}{\mu-1}\div\X$ for $\mu\neq 1$.
\end{example}

\subsection{Classification of derivations of $\Den(M)$}
Using the notion of canonical divergence we are able to classify all derivations of $\Den(M)$ or vector fields on $\hat M$, except for weight $\mu=1$.

First we consider  the derivations of $\Den(M)$ with vanishing divergence. They   allow the following complete description.
\begin{theorem} \label{classdivfree}
Every divergence-free vector field $\X\in\Vect(\hat M)$   of weight $\mu\neq 1$ is uniquely defined by its restriction on the subalgebra $\fun(M)\subset \Den(M)$, which can be  an arbitrary vector density $\XX\in \Vect(M,\Den_{\mu})$    of weight $\mu$. In local coordinates, $\XX=|Dx|^{\mu}\,\XX^a(x)\p_a$ and %the component $X^0$ of $\X$ is defined by $X^0=(1-\mu)^{-1}\left(\p_aX^a(-1)^{(\Xt+1)\at}\right)$ and
\begin{equation}\label{divfree}
    \X=t^{\mu}\left(\XX^a(x)\,\p_a -  (\mu-1)^{-1}\,\p_a\XX^a(-1)^{\at(\XXt+1)}\,\w\right)\,.
\end{equation}
\end{theorem}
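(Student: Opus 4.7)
The plan is to reduce the theorem to the explicit divergence formula \eqref{divexplicmu}, which already does most of the work by turning the coordinate-free condition $\div \X=0$ into a pointwise algebraic relation between the two coefficient functions $X^a$ and $X^0$.

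First I would take an arbitrary graded vector field $\X\in\Vect(\hat M)$ of weight $\mu$ and write it locally in the form \eqref{vectfieldmu},
\begin{equation*}
    \X = t^{\mu}\bigl(X^a(x)\,\p_a + X^0(x)\,\w\bigr)\,.
\end{equation*}
The restriction of $\X$ to the subalgebra $\fun(M)\subset \Den(M)$ sees only the $X^a\p_a$ part, since $\w$ annihilates weight-zero densities: for any $f\in\fun(M)$ we have $\X f = t^{\mu}X^a(x)\p_a f$, which is precisely the action of the vector density $\XX=|Dx|^{\mu}\XX^a(x)\p_a$ with $\XX^a=X^a$. Thus the restriction determines, and is determined by, the coefficients $X^a$.

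Next I would impose $\div\X=0$ and apply \eqref{divexplicmu} to rewrite this as the scalar equation
\begin{equation*}
    \p_a X^a(-1)^{\at(\bXt+1)} + (\mu-1)\,X^0 = 0\,.
\end{equation*}
Provided $\mu\neq 1$, this algebraically solves for $X^0$ in terms of $X^a$, yielding
\begin{equation*}
    X^0 = -\,\frac{1}{\mu-1}\,\p_a X^a\,(-1)^{\at(\bXt+1)}\,,
\end{equation*}
which is precisely the coefficient appearing in \eqref{divfree}. This gives uniqueness at once: $X^0$ is forced by $X^a$, so $\X$ is determined by $\XX$. Conversely, given any vector density $\XX$ of weight $\mu$, the same formula defines $X^0$ and hence a local expression for $\X$; one then verifies (b)~$w(\X)=\mu$, (c)~$\div\X=0$, and (d)~$\X|_{\fun(M)}=\XX$ by substituting back into \eqref{divexplicmu}.

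The one subtle point is coordinate invariance of the construction: a priori, the explicit formula \eqref{divfree} might describe different vector fields in different charts. The cleanest way to handle this is to observe that the characterizing conditions ``$\X$ is a weight-$\mu$ vector field on $\hat M$ with $\div\X=0$ and $\X|_{\fun(M)}=\XX$'' are coordinate-free, while the local analysis above shows they admit \emph{at most} one solution in each chart. Hence the locally defined expressions automatically agree on overlaps and glue to a global $\X\in\Vect(\hat M)$. I expect this gluing step to be the only place that looks delicate, but it is bypassed by the ``invariant condition plus uniqueness'' argument rather than by grinding through the transformation laws \eqref{translawforda}--\eqref{translawfordt}; direct verification via those transformation laws is possible but unnecessary.
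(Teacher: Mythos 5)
Your proposal is correct and follows essentially the same route as the paper's own proof: write $\X$ locally as in \eqref{vectfieldmu}, impose $\div\X=0$ via \eqref{divexplicmu}, and solve uniquely for $X^0$ when $\mu\neq 1$, with $X^a$ arbitrary. The extra remark on gluing via ``invariant condition plus local uniqueness'' is a small, sound refinement of a point the paper leaves implicit.
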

\begin{proof} Indeed, consider a vector field $\X$ of weight $\mu$ on $\hat M$. In local coordinates,
\begin{equation*}
    \X=t^{\mu}\bigl(X^a(x)\,\p_a+ X^0(x)\,\w\bigr)\,.
\end{equation*}
Assume that $\div \X=0$. By formula~\eqref{divexplicmu}, this is equivalent to the equation
\begin{equation*}
     \p_aX^a(-1)^{\at(\bXt+1)} + (\mu-1)X^0=0\,.
\end{equation*}
If $\mu\neq 1$, this can be uniquely solved for $X^0$. We arrive at
\begin{equation*}
    X^0 = -(\mu-1)^{-1}\,\p_aX^a(-1)^{\at(\bXt+1)}\,,
\end{equation*}
and $\X$ is given by~\eqref{divfree} (up to a change of notation\,\footnote{\,It is traditional to use German letters for denoting tensor densities of various weights.}). Note that the components $X^a$ can be   arbitrary.
\end{proof}

Therefore, for weight   $\neq 1$, there is a one-to-one correspondence between the divergence-free vector fields $\X$ on $\hat M$ and arbitrary vector densities $\XX$ of the same weight on $M$.

For  an ordinary vector field $X\in \Vect(M)$ (where  $\mu=0$), the corresponding divergence-free vector field on $\hat M$ coincides with the Lie derivative $\X=L_X$.

This motivates the following definition: for a  vector density $\mathfrak{X}\in \Vect(M,\Den)$ of  arbitrary weight $\mu\neq 1$, we call the corresponding divergence-free vector field $\X$ on $\hat M$ defined by equation~\eqref{divfree}, the \emph{generalized Lie derivative} w.r.t.  $\XX$ and denote it $L_{\mathfrak{X}}$.

Now consider derivations of $\Den(M)$ that are not necessarily divergence-free.

Let us return to Example~\ref{divvert}. Vertical vector fields on $\hat M$ are a well-defined subspace of $\Vect(\hat M)$ and the operator
\begin{equation*}
    \X\mapsto \bigl((\w-1)^{-1}\div \X\bigr) \w
\end{equation*}
is a projector on this subspace, defined when $\w-1$ is invertible, i.e., except for weight $1$. The kernel of this projector consists precisely of the divergence-free vector fields on $\hat M$. We can summarize this by the following statement.

\begin{theorem} \label{classder}
For arbitrary vector fields on $\hat M$ with weight $\mu\neq 1$, there is a unique decomposition into the sum of a divergence-free vector field and a vertical vector field, $\X=\X'+\X''$\,, where
  \begin{equation*}
    \X'=X^a\p_a-(\w-1)^{-1}\left(\p_aX^a\,(-1)^{\at(\bXt+1)}\right)\w
  \end{equation*}
is divergence-free (or a generalized Lie derivative) and
\begin{equation*}
    X''=\bigl((\w-1)^{-1}\div \X\bigr) \w
\end{equation*}
is vertical. Here $X^a=X^a(x,t)$. The decomposition makes sense when $\w-1$ is invertible, i.e., for $\X\in\bigoplus_{\mu\neq 1}\Vect_{\mu}(\hat M)$. \qed
\end{theorem}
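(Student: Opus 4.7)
The plan is to verify the decomposition by direct construction, making essential use of the divergence formula \eqref{divexplic} and the vertical-field computation in Example \ref{divvert}. Writing an arbitrary $\X\in\Vect(\hat M)$ of weight $\mu\neq 1$ in local coordinates as $\X=X^a(x,t)\p_a+X^0(x,t)\w$, the key preliminary observation is that on homogeneous objects of weight $\mu\neq 1$ the operator $\w-1$ acts as multiplication by the nonzero scalar $\mu-1$, and is therefore invertible; hence $(\w-1)^{-1}$ is well defined on $\bigoplus_{\mu\neq 1}\Vect_{\mu}(\hat M)$ and on the relevant spaces of coefficients, acting as $\tfrac{1}{\mu-1}$ on each weight-$\mu$ summand.

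For existence, I would define $\X'':=\bigl((\w-1)^{-1}\div\X\bigr)\w$, which is manifestly vertical, and set $\X':=\X-\X''$. Since $\div$ is linear in its argument, $\div\X'=\div\X-\div\X''$, and Example \ref{divvert} gives $\div\X''=(\w-1)\bigl((\w-1)^{-1}\div\X\bigr)=\div\X$, so $\div\X'=0$. Substituting the explicit formula \eqref{divexplic} for $\div\X$ into the definition of $\X'$ then recovers the stated expression $\X'=X^a\p_a-(\w-1)^{-1}\bigl(\p_aX^a(-1)^{\at(\bXt+1)}\bigr)\w$. By Theorem \ref{classdivfree}, $\X'$ is a generalized Lie derivative of a uniquely determined vector density on $M$.

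For uniqueness, suppose two such decompositions $\X=\X_1'+\X_1''=\X_2'+\X_2''$ are given. Then $\Y:=\X_1''-\X_2''=\X_2'-\X_1'$ is simultaneously vertical and divergence-free; writing $\Y=Y^0\w$, Example \ref{divvert} yields $0=\div\Y=(\w-1)(Y^0)$, and invertibility of $\w-1$ on the weight-$\neq 1$ part forces $Y^0=0$, so the two decompositions coincide.

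The only obstacle worth naming is conceptual rather than computational: one must regard $(\w-1)^{-1}$ as an operator defined weight by weight on the graded subspace $\bigoplus_{\mu\neq 1}\Vect_{\mu}(\hat M)$ (and on the analogous graded spaces of coefficients), rather than as a global inverse on all of $\Vect(\hat M)$. This is precisely where the exclusion $\mu\neq 1$ enters in an essential way, and it explains why the vertical projector $\X\mapsto\bigl((\w-1)^{-1}\div\X\bigr)\w$ is defined only off weight $1$.
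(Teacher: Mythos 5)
Your proposal is correct and follows essentially the same route as the paper: the paper establishes the theorem by observing that $\X\mapsto\bigl((\w-1)^{-1}\div\X\bigr)\w$ is a projector onto the vertical fields whose kernel is exactly the divergence-free fields, which is precisely the construction you carry out (you merely spell out the verification of $\div\X''=\div\X$ via Example~\ref{divvert} and the uniqueness step in more detail). Nothing further is needed.
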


We can explain this also as follows. Denote by $\Ver_{\mu}(\hat M)$ the space of vertical vector fields of weight $\mu$ on $\hat M$ (in local coordinates $t^{\mu}\ps(x)\,\w$). There is a short exact sequence
\begin{equation*}
    \begin{CD}
    0@>>> \Ver_{\mu}(\hat M)@>i>> \Vect_{\mu}(\hat M) @>p>> \Vect(M,\Den_{\mu}) @>>> 0
\end{CD}
\end{equation*}
where $i\co\Ver_{\mu}(\hat M)\to\Vect_{\mu}(\hat M)$ is the natural inclusion and $p\co\Vect_{\mu}(\hat M) \to \Vect(M,\Den_{\mu})$ is the natural projection sending a vector field $\X$ to its restriction onto the subalgebra $\fun(M)$. For $\mu\neq 1$, this sequence splits and we have the direct sum decomposition
\begin{equation*}
    \Vect_{\mu}(\hat M) = \Ver_{\mu}(\hat M)\oplus \Vect(M,\Den_{\mu})\,.
\end{equation*}
The splitting is given by the maps $\Vect(M,\Den_{\mu}) \to \Vect_{\mu}(\hat M)$ sending $\XX$ to $L_{\XX}$ and  $\Vect_{\mu}(\hat M)\to \Ver_{\mu}(\hat M)$ sending $\X$ to $\bigl((\mu-1)^{-1}\div\X\bigr)\w$\,.

\smallskip
{\small
The case of weight $\mu=1$ is exceptional. For $\mu=1$, on vector densities   on $M$ there is a canonical  divergence  $\div\co \Vect(M,\Den_1) \to \fun(M)$, $\div\XX=\p_a \XX^a (-1)^{\at(\XXt+1)}$,  and the canonical divergence on $\Vect_1(\hat M)$ factors through the projection on $\Vect(M,\Den_1)$,   $\div=\div\circ\, p$, so   the splitting  {breaks}
down and we have a ``Jordan block'' situation rather than a direct sum.

}

\smallskip

\begin{comment}
\begin{tikzcd}
A \arrow[hook]{r}{u}[swap]{b}
\arrow[two heads]{rd}{u}[swap]{b}
&B \arrow[dotted]{d}{r}[swap]{l}
\arrow[hookleftarrow]{r}{u}[swap]{b}
&C \arrow[two heads]{ld}{b}[swap]{u}\\
&D
\end{tikzcd}
\end{comment}

(Theorem~\ref{classdivfree} and Theorem~\ref{classder} were obtained in~\cite{tv:laplace2}.)

\subsection{Corollary: the Lie bracket of vector densities}
The commutator of two di\-ver\-gence-free derivations of $\Den(M)$ (or vector fields on $\hat M$) is again divergence-free. Indeed, such vector fields are exactly those whose flows preserve the volume element, and the same holds for the flow generated by  the commutator.

Alternatively, we may apply the following statement.

\begin{proposition} The canonical divergence $\div$ on $\Vect(\hat M)$ satisfies the identity
\begin{equation}\label{divcommut}
    \div [\X,\Y]= \X (\div \Y)-(-1)^{\bXt\bYt}\Y(\div\X)\,.
\end{equation}
\end{proposition}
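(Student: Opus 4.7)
The plan is to derive the identity directly from the algebraic characterization $\div \X = -(\X + \X^*)$ given in formula~\eqref{divcan}. The advantage of this approach is that it reduces the proof to formal manipulations with adjoints and super-commutators, avoiding any local-coordinate computation with~\eqref{divexplic}.

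First I would compute $[\X,\Y]^*$. Using the super-rule for adjoints $(\mathbf{A}\mathbf{B})^* = (-1)^{\At\Bt}\mathbf{B}^*\mathbf{A}^*$, a short calculation gives the clean identity
\begin{equation*}
    [\X,\Y]^* = -[\X^*,\Y^*]\,,
\end{equation*}
so that
\begin{equation*}
    \div[\X,\Y] = -[\X,\Y] - [\X,\Y]^* = -[\X,\Y] + [\X^*,\Y^*]\,.
\end{equation*}
Next, I would substitute $\X^* = -\X - \div \X$ and $\Y^* = -\Y - \div \Y$, where $\div \X$ and $\div \Y$ are regarded as operators of multiplication. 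Expanding the bracket by bilinearity yields four terms: $[\X,\Y]$, $[\X,\div\Y]$, $[\div\X,\Y]$, and $[\div\X,\div\Y]$.

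Now I would dispose of each piece. The first term $[\X,\Y]$ cancels against the $-[\X,\Y]$ above. The last term $[\div\X,\div\Y]$ vanishes because multiplications by elements of the supercommutative algebra $\Den(M)$ super-commute. The cross terms are the heart of the matter: for a derivation $\X$ of parity $\bXt$ and multiplication $m_a$ by an element of parity $\at$, one checks on an arbitrary $\ps\in\Den(M)$ that $[\X,m_a] = m_{\X(a)}$ as operators. Applying this with $a=\div\Y$ gives $[\X,\div\Y] = \X(\div\Y)$. The other cross term is the mirror image with the usual super-sign: $[\div\X,\Y] = -(-1)^{\bXt\bYt}[\Y,\div\X] = -(-1)^{\bXt\bYt}\,\Y(\div\X)$. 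Assembling everything produces~\eqref{divcommut}.

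The main obstacle is purely bookkeeping: keeping the Koszul signs straight in the super case, especially in the identity $[\X,\Y]^*=-[\X^*,\Y^*]$ (the two sign factors conspire so that the final sign in front of $[\X^*,\Y^*]$ is simply $-1$) and in the rule $[\X,m_a]=m_{\X(a)}$, which requires verifying that the ``extra'' term coming from moving $\X$ past $m_a$ cancels against the super-commutator correction. Once these two sign computations are carried out carefully the rest is immediate.
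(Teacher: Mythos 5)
Your proof is correct and follows essentially the same route as the paper's: both rest on the algebraic definition $\div\X=-(\X+\X^*)$, the anti-homomorphism property of the adjoint under composition, and the fact that multiplication operators by elements of the supercommutative algebra $\Den(M)$ commute. The only difference is packaging: the paper first extends $\div$ to differential operators of arbitrary order by $\div\D=-(\D-(-1)^k\D^*)$ and records the general identity $\div[\D_1,\D_2]=[\div\D_1,\D_2]+[\D_1,\div\D_2]+[\div\D_1,\div\D_2]$ before specializing, whereas you carry out the same commutator computation directly in the first-order case.
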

\begin{remark}
Such an identity holds for any abstract divergence defined by an invariant scalar product and expresses its `flatness'. In   general,  there is   an extra   term (`curvature').
\end{remark}
\begin{proof} We use the definition of divergence on $\Vect(\hat M)$ in the form $\div\X=-\X-\X^*$. Following~\cite{tv:laplace2}, we may extend this formula to arbitrary differential operators by setting
\begin{equation*}
    \div \D:=-(\D-(-1)^k\D^*)
\end{equation*}
for an operator  of order $\leq k$. The operation $\div$ so defined takes operators of order $\leq k$ to   operators of order $\leq k-1$. The following identity is obtained by a direct check:
\begin{equation*}
    \div [\D_1,\D_2]=[\div \D_1, \D_2]+[\D_1,\div \D_2] +[\div \D_1,\div \D_2]\,.
\end{equation*}
Since functions commute, equation~\eqref{divcommut} for   vector fields follows as a special case.
\end{proof}
\begin{remark} As noted in~\cite{tv:laplace2}, the operation $\div$ on differential operators satisfies $\div^2=0$ and thus defines a complex.
\end{remark}

Denote the Lie superalgebra of divergence-free vector fields on $\hat M$ by $\SVect(\hat M)=\bigoplus_{\mu} \SVect_{\mu}(\hat M)$. By Theorem~\ref{classdivfree}, $\SVect_{\mu}(\hat M)\cong \Vect(M,\Den_{\mu})$\,,
for $\mu\neq 1$.

Hence, the commutator of vector fields on $\hat M$ induces the \emph{Lie bracket of vector densities} on $M$ as follows. Given $\XX \in \Vect(M,\Den_{\mu})$ and $\YY \in \Vect(M,\Den_{\nu})$, where $\mu, \nu\neq 1$, take the commutator of the corresponding generalized Lie derivatives $L_{\XX}$ and $L_{\YY}$, and restrict it back to $M$\,:
\begin{equation*}
    [\XX,\YY]:=\left.[L_{\XX},L_{\YY}]\right|_{\fun(M)}\,.
\end{equation*}

This gives the following explicit formula. If $\XX=|Dx|^{\mu}\,\XX^a(x)\p_a$ and $\YY=|Dx|^{\nu}\,\YY^a(x)\p_a$\,, then

\begin{multline}\label{liebracketdens}
    [\XX,\YY]=|Dx|^{\mu+\nu}\, \left(\Bigl(\XX^a\p_a -\frac{\nu}{\mu-1}\,\p_a\XX^a(-1)^{\at(\XX+1)}\Bigr)\YY^b  \right.\\
    -\left.(-1)^{\XXt\YYt}\Bigl(\YY^a\p_a -\frac{\mu}{\nu-1}\,\p_a\YY^a(-1)^{\at(\YY+1)}\Bigr)\XX^b\right)\p_b\,.
\end{multline}

%By the definition, \begin{equation*}    L_{[\XX,\YY]}=[L_{\XX},L_{\YY}]\,. \end{equation*}

\subsection{(Anti-)self-adjoint operators of the first and second order}

For differential operators $L$ of order $\leq k$ on the algebra $\Den(M)$, it makes sense to study the condition   $L^*=\pm\, L$  depending on the parity of the number $k$. (That is,  $\div L=0$, for a formal `divergence on operators' defined above.)

A first-order differential operator $L$ on the algebra $\Den(M)$ is anti-self-adjoint ($L^*=-L$) if and only if it is the generalized Lie derivative along a vector density: $L=L_{\mathfrak{X}}$, for $\mathfrak{X}\in\Vect(M,\Den_{\mu})$. To it there corresponds a pencil
\begin{equation*}
    L_{\la}= |Dx|^{\mu}\left(\XX^a\p_a-\frac{\la}{\mu-1}\, \p_a\XX^a\,(-1)^{\at(\XXt+1)}\right)\,.
\end{equation*}
In view of the definition of $\div$, this is a re-statement of Theorem~\ref{classdivfree}.

\begin{remark}
As we showed in~\cite{tv:laplace2}, a second-order  differential operator   $L$ on the algebra $\Den(M)$ which is self-adjoint ($L^*=+L$) and normalized by the condition $L1=0$ corresponds to an operator pencil of the form
\begin{multline*}
    L_{\la}=|Dx|^{\mu}\frac{1}{2}\Biggl(\SSS^{ab}\p_b\p_a+
    \left(\p_b\SSS^{ba}(-1)^{\bt(\e+1)}+(2 \la+\mu-1)\g^a\right)\p_a +\\
     \la\,\p_a\g^a(-1)^{\at(\e+1)}  +
         \la(\la+ \mu-1)\,\lt \Biggr).
\end{multline*}
It is completely defined by the following geometric data: the principal symbol $\SSS^{ab}$, which is a tensor density on $M$; the subprincipal symbol  $\g^a$, which has the geometric meaning of an `upper connection' on volume forms associated with $\SSS^{ab}$;  and a so-called `Brans--Dicke field' $\lt$\,. Here $\e=\tilde L$\,.
\end{remark}

\section{Generalization to multivector fields}

\subsection{Recollection of multivector fields} For a (super)manifold $M$, by $\A(M)$ we denote the algebra of multivector fields\,\footnote{In the super case, we consider, strictly speaking,   `pseudo-multivector fields'.} on $M$. Recall some basic facts concerning this algebra.
\begin{itemize}
  \item $\A(M)=\fun(\Pi T^*M)$, where $\Pi$ is the parity reversion functor. (We can take this as the definition of $\A(M)$.) In local coordinates, a multivector field has the form $P=P(x,x^*)$, where    $x^*_a$ transforms as $\p_a$ and $\tilde x^*_a=\at+1$ (the `antimomentum' conjugate to $x^a$);
  \item On the algebra $\A(M)$ there is a canonical Schouten bracket (odd Poisson bracket);
  \item On  the module of multivector densities $\A(M, \Vol)$ there is a canonical odd Laplacian
  \begin{equation*}
    \delta =\der{}{x^a}\der{}{x^*_a}
  \end{equation*}
  (or `divergence of multivector densities')
 and $\delta^2=0$;
  \item For any choice of a volume element, allowing to identify $\A(M, \Vol)$ with $\A(M)$\,, the corresponding odd Laplacian on $\A(M)$ generates the Schouten bracket:
      \begin{equation*}
        \delta(PQ)=\delta P\, Q+(-1)^{\Pt}P\,\delta Q +(-1)^{\Pt+1}[P,Q]
      \end{equation*}
and it is a derivation of the bracket:
\begin{equation*}
    \delta [P,Q]=[\delta P,Q]+(-1)^{\Pt+1}[P,\delta Q]\,.
\end{equation*}
\end{itemize}
Details concerning these facts can be found, for example, in~\cite{tv:graded} and \cite{tv:laplace1}.

\begin{remark} Different names for $\d$ (`divergence' and `Laplacian') should not cause confusion. The operator $\d$ on $\A(M,\Vol)$ is called a `divergence' because it extends the divergence on vector densities $\Vect(M,\Vol)$. On the other hand, it is an (odd) `Laplacian' because it is a second-order differential operator from the viewpoint of $\Pi T^*M$.
\end{remark}

{\small

\begin{remark} For supermanifolds, the complex $\bigl(\A(M, \Vol),\d\bigr)$ is known as the complex of \emph{(pseudo)\-integral forms} and the notation $\Sigma(M)$ for $\A(M,\Vol)$ is employed. In the ordinary case, multivector densities are isomorphic to differential forms and $\d$ is just a `dual' description of the de Rham differential, but in the super case, there is no
such   isomorphism (differential and integral forms are different objects; they are particular instances of the `super de Rham complexes' $\O^{*|s}(M)$, $s=0,\ldots,m$, if $M=M^{n|m}$). For various aspects of that see, e.g., \cite{tv:git, tv:dual, tv:susy, tv:ber}.
\end{remark}

}

\subsection{Classification of multivector fields on $\hat M$}

We shall consider graded multivector fields on the (super)manifold $\hat M$ for a given (super)manifold $M$ and obtain their classification in the way similar to the above classification of vector fields. With an abuse of language, we shall speak about `multivector fields' on $\hat M$ and for the algebra $\Den(M)$ interchangeably. On $\hat M$, we have $\A(\hat M)=\oplus_{\mu}\A_{\mu}(\hat M)$.

Let  $P\in \A(\hat M)$. In local coordinates, we may write $P=P(x,t,x^*,t^*)$ as
\begin{equation}\label{multivect}
    P=P_0(x,t,x^*)+tt^*P_1(x,t,x^*)\,.
\end{equation}
Here the variables $x^*_a$ are the conjugate antimomenta for $x^a$ and the  variable $t^*$ is the conjugate antimomentum for $t$. Note that $t^*$ is odd and $w(t^*)=-1$. Note also the transformation laws
\begin{align}
    x^*_a&= \p_a{x^{a'}}  x^*_{a'} +  (\p_a\ln J)\, t't'^*\,, \label{xstaronhatm} \\
    tt^* &= t't'^*  \,.
\end{align}
(compare~\eqref{translawforda}, \eqref{translawfordt}). Thus, setting $t^*=0$ to isolate $P_0$ in~\eqref{multivect} makes invariant sense.

On $\hat M$, we have the canonical volume element $|D(x,t)|t^{-2}$.  Therefore on multivector fields $\A(\hat M)$ there is the corresponding \emph{canonical odd Laplace operator} (or \emph{canonical divergence}):
\begin{equation}
    \delta= \der{}{x^a}\der{}{x^*_a} + t^{2}\der{}{t}t^{-2}\der{}{t^*}\,.
\end{equation}
Hence, for $P\in\A(\hat M)$ given in coordinates by~\eqref{multivect}, we obtain
\begin{multline*}
    \delta P = \left(\der{}{x^a}\der{}{x^*_a} + t^{2}\der{}{t}t^{-2}\der{}{t^*}\right)\bigl(P_0(x,t,x^*)+tt^*P_1(x,t,x^*)\bigr)
     = \\
    \delta_0 P_0 -tt^*\delta_0P_1 +t^2\der{}{t}t^{-1}P_1=
    \delta_0 P_0 -P_1 +t\der{}{t}P_1 -tt^*\delta_0P_1=
    \delta_0 P_0 -(1-\w)(P_1)   -tt^*\delta_0P_1
    \,,
\end{multline*}
where we denoted $\delta_0=\der{}{x^a}\der{}{x^*_a}$.

Suppose $\delta P=0$. Then it is equivalent to the system:
\begin{align*}
    \delta_0P_0-(1-\w)P_1=0\,,\\
    \delta_0P_1=0\,.
\end{align*}
Since $\d_0^2=0$, the first equation implies $(1-\w)\d_0P_1=0$. The equations can be immediately solved if $1-\w$ is invertible on $P$ (that is, $P$ does not include a term of weight $1$), giving
\begin{equation*}
    P_1= (1-\w)^{-1}\delta_0P_0\,,
\end{equation*}
and we arrive at the following classification theorem (analogous to Theorem~\ref{classder} for derivations).
\begin{theorem}[A. Biggs, 2013] \label{classmult}
Every divergence-free multivector field on $\hat M$ on which $1-\w$ is invertible has the form
\begin{equation*}
    P=P_0 +tt^* (1-\w)^{-1}\delta_0P_0\,,
\end{equation*}
where $\delta_0=\der{}{x^a}\der{}{x^*_a}$.
It is completely defined by the first term $P_0=P_0(x,t,x^*)$, which is a  multivector density on $M$. \qed
\end{theorem}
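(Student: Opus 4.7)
The plan is to mirror the explicit calculation sketched just before the statement, organizing it into three logical steps.

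First, I would establish the canonical form $P = P_0(x,t,x^*) + tt^*\,P_1(x,t,x^*)$ for any multivector field on $\hat M$. Since $t^*$ is odd, it enters $P$ at most linearly, so one can write $P = A + t^* B$ with $A,B$ independent of $t^*$; invertibility of $t$ then allows the substitution $B = t P_1$. The virtue of this form is that it is intrinsic: the combination $tt^*$ is invariant under coordinate changes on $\hat M$ (coming from the transformation laws listed in the paper), so $P_0$ is canonically the restriction of $P$ to the locus $t^*=0$, and $P_1$ is well-defined accordingly.

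Second, I would apply the canonical odd Laplacian $\delta = \partial_{x^a}\partial_{x^*_a} + t^2 \partial_t\, t^{-2}\, \partial_{t^*}$ to this decomposition. The vertical term $t^2\partial_t\, t^{-2}\partial_{t^*}$ kills $P_0$ and, acting on $tt^*P_1$, reduces after a short direct calculation to $(\w - 1)P_1$. The horizontal term $\delta_0 = \partial_{x^a}\partial_{x^*_a}$ gives $\delta_0 P_0$ on the first summand and $-tt^*\,\delta_0 P_1$ on the second (the sign arises because the odd operator $\partial_{x^*_a}$ must be moved past the odd factor $tt^*$). Collecting terms yields
\begin{equation*}
    \delta P = \bigl(\delta_0 P_0 - (1-\w)P_1\bigr) - tt^*\,\delta_0 P_1\,,
\end{equation*}
exactly as displayed in the text.

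Third, the condition $\delta P = 0$ splits by powers of $t^*$ into the pair
\begin{equation*}
    (1-\w)P_1 = \delta_0 P_0\,, \qquad \delta_0 P_1 = 0\,.
\end{equation*}
Under the hypothesis that $1-\w$ is invertible on $P$ (equivalently, $P$ has no component of weight $1$), the first equation solves uniquely for $P_1 = (1-\w)^{-1}\delta_0 P_0$, proving the claimed formula. Compatibility of the two equations is automatic: applying $\delta_0$ to the first and using $\delta_0^2 = 0$ gives $(1-\w)\delta_0 P_1 = 0$, whence invertibility forces $\delta_0 P_1 = 0$, so the second equation imposes no further constraint on $P_0$. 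Conversely, a direct check confirms that any $P$ produced by this formula from an arbitrary $P_0$ satisfies $\delta P = 0$.

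I do not anticipate any serious obstacle; the main subtlety is bookkeeping the signs in the vertical part of $\delta$ acting on $tt^*P_1$, where the non-commuting factors $t^2,\,\partial_t,\,t^{-2}$ and the Grassmann variable $t^*$ must be handled in the correct order. The conceptual reason the classification comes out so cleanly is precisely that $\delta_0^2 = 0$ collapses the two a priori constraints on the pair $(P_0, P_1)$ into one, yielding a bijection between $\delta$-closed multivector fields (of weight $\neq 1$) and arbitrary multivector densities $P_0$ on $M$.
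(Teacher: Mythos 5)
Your proposal is correct and follows essentially the same route as the paper: the decomposition $P=P_0+tt^*P_1$, the computation $\delta P=\delta_0P_0-(1-\w)P_1-tt^*\delta_0P_1$, and the observation that $\delta_0^2=0$ makes the second constraint automatic once $1-\w$ is inverted. Your added remarks on the invariance of the $t^*=0$ locus and the explicit converse check are sound elaborations of points the paper leaves implicit.
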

Therefore, for $\mu\neq 1$,
\begin{equation*}
    \Ker \{\d\co \A_{\mu}(\hat M)\to \A_{\mu}(\hat M)\} \cong \A(M,\Den_{\mu})\,.
\end{equation*}
(The case $\mu=1$ is again exceptional. In this case, the operator $\d_0$ is invariant on $\A(M,\Vol)$ and $\d P=0$ is equivalent to the two conditions: $\d_0P_0=0$ and $\d_0P_1=0$.)

\subsection{Useful corollaries. Application to Poisson brackets}
Since the divergence (corresponding to a choice of a volume element) is a derivation of the Schouten bracket for any manifold, the divergence-free multivector fields are closed under the bracket\,\footnote{Worth mentioning that the bracket of such fields is always $\d$-exact, $[P,Q]=\pm\,\d(PQ)$, so it induces zero on $\d$-cohomology.}. In particular, this holds for the canonical divergence $\d$ on $\hat M$. We have a Lie superalgebra
\begin{equation*}
    \SA(\hat M):=\Ker \{\d\co \A(\hat M)\to \A(\hat M)\}= \bigoplus_{\mu} \SA_{\mu}(\hat M)\,.
\end{equation*}
(This is not a Poisson subalgebra of $\A(\hat M)$, because the product of divergence-free multivector fields need not be divergence-free; in fact, its divergence is up to sign the bracket.)

By Theorem~\ref{classmult}, an odd Lie bracket is induced on multivector densities on $M$. Suppose $\PP\in\A(M,\Den_{\mu})$, where $\mu\neq 1$. To it corresponds the divergence-free multivector field $\hat\PP$ on $\hat M$,
\begin{equation*}
    \hat\PP=\PP  +tt^*\, \frac{1}{1-\mu}\, \delta_0\PP
\end{equation*}
(this is analogous to the definition of the `generalized Lie derivative' for vector densities). If $\QQ\in\A(M,\Den_{\nu})$, where $\nu\neq 1$, we have the induced odd bracket
\begin{equation*}
    [\PP,\QQ]:= [\hat\PP,\hat\QQ]_{|t^*=0}\,.
\end{equation*}
Explicitly,
\begin{equation}\label{bracketmult}
    [\PP,\QQ]=[\PP,\QQ]_0+\frac{\nu}{1-\mu}\,(-1)^{\tilde\PP+1}\d_0\PP\cdot\QQ - \frac{\mu}{1-\nu}\,\PP\cdot\d_0\QQ
\end{equation}
(by a direct check). At the r.h.s. we have denoted by $[-,-]_0$ the Schouten bracket on $M$ applied    to multivector densities, which can be done in particular coordinates; neither it  nor  the operator $\d_0$ have  an intrinsic  meaning, but the whole bracket $[-,-]$ does.

We see, in particular, that if $\mu=\nu=0$ in equation~\eqref{bracketmult}, the last two terms disappear and the odd bracket induced from $\hat M$ coincides with the canonical Schouten bracket on $M$. Hence multivector fields on $M$ can be lifted to multivector fields on $\hat M$ by
\begin{equation}\label{hatp}
    \hat P=P  +tt^*\,  \delta_0 P\,,
\end{equation}
where $P\in\A(M)$,   and this lifting preserves the Schouten brackets.

\begin{remark} If $P=P(x,x^*)$ is a multivector field on $M$, one cannot extend it to $\hat M$ ``identically'' by regarding the antimomentum variables $x^*_a$, which are conjugate to $x^a$ on $M$, as the same conjugate variables on $\hat M$. They transform differently (see~\eqref{xstaronhatm} for the transformation law on $\hat M$) and such an ``identical'' lifting would have no invariant meaning. The second term in~\eqref{hatp} compensates for that.
\end{remark}

\begin{corollary} \label{corol}
Every even Poisson structure on $M$ extends canonically to an even Poisson bracket on the algebra of densities $\Den(M)$. The same holds for homotopy Poisson structures.
\end{corollary}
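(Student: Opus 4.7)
The plan is to use the canonical lifting $P \mapsto \hat P = P + tt^*\,\delta_0 P$ of multivector fields on $M$ to divergence-free multivector fields on $\hat M$ constructed just above. The key input already established there is that this lifting preserves the Schouten bracket, and with that in hand the corollary becomes essentially formal.

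Given an even Poisson structure on $M$ encoded by an even bivector $P \in \A(M)$ with $[P,P] = 0$, I would first form the lift $\hat P \in \A_0(\hat M)$. Since $P \mapsto \hat P$ is a homomorphism of Schouten algebras, $[\hat P, \hat P] = \widehat{[P,P]} = 0$. (A direct verification is also available: $[\hat P, \hat P]$ is divergence-free of weight zero, so by Theorem~\ref{classmult} it is uniquely determined by its restriction to $\fun(M) \subset \Den(M)$, which equals $[P,P] = 0$.) Therefore $\hat P$ is an even Poisson bivector on the graded manifold $\hat M$ and induces an even Poisson bracket $\{\cdot,\cdot\}_{\hat P}$ on $\fun(\hat M)$.

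It remains to restrict this bracket to $\Den(M) \subset \fun(\hat M)$. Here the $\RR$-grading does the work: because $\hat P$ has weight zero (both summands do), the induced bracket is weight-preserving, $w(\{f,g\}_{\hat P}) = w(f) + w(g)$, so it sends finite sums of homogeneous densities into finite sums of homogeneous densities. Hence $\Den(M)$ is a Poisson subalgebra of $\fun(\hat M)$. Moreover, on $\fun(M) \subset \Den(M)$ the differentials involve no $dt$ and no antimomentum $t^*$, so $\{\cdot,\cdot\}_{\hat P}$ restricts to the original bracket $\{\cdot,\cdot\}_P$ on $M$; since the whole construction uses only the canonical ingredients on $\hat M$, the extension is manifestly canonical.

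For the homotopy Poisson case, one allows $P = \sum_{k \geq 0} P_k \in \A(M)$ to be even and inhomogeneous with $[P,P] = 0$; the same formula $\hat P = P + tt^*\,\delta_0 P$ produces a lift satisfying $[\hat P, \hat P] = 0$ by exactly the same homomorphism argument, and the weight-zero character again ensures that $\Den(M)$ is preserved. The only non-formal step in the whole argument is the Schouten-homomorphism property of the lifting, which is the substantive content residing in the preceding construction; once that is granted, what remains is weight bookkeeping, and that is the point I would want to double-check carefully rather than a genuine obstacle.
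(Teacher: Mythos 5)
Your proposal is correct and follows essentially the same route as the paper: lift $P$ to the divergence-free $\hat P=P+tt^*\delta_0 P$, invoke the Schouten-homomorphism property of the lifting (which the paper derives from formula~\eqref{bracketmult} at $\mu=\nu=0$ together with Theorem~\ref{classmult}) to get $[\hat P,\hat P]=0$, and restrict the resulting bracket to $\Den(M)\subset\fun(\hat M)$. Your explicit weight-zero bookkeeping showing that $\Den(M)$ is closed under $\{\cdot,\cdot\}_{\hat P}$ is a detail the paper leaves implicit, and it is right.
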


\begin{example} Suppose $P\in\A(M)$,
\begin{equation*}
    P=\frac{1}{2}\,P^{ab}(x)x^*_bx^*_a\,.
\end{equation*}
Then $\hat P\in\A(\hat M)$ is given by
\begin{equation*}
    \hat P=\frac{1}{2}\,P^{ab}(x)x^*_bx^*_a+tt^* \p_aP^{ab}\,x^*_b=\frac{1}{2}\,P^{ab}(x)x^*_bx^*_a-t \p_aP^{ab}\,x^*_bt^*\,.
\end{equation*}
If $P$ is a Poisson tensor, i.e., $[P,P]=0$, then $\hat P$ is also a Poisson tensor, and for the `lifted' Poisson bracket on $\Den(M)$ we obtain, by a direct check,
\begin{equation*}
    \{x^a,x^b\}_{\hat P}={\{x^a,x^b\}_{P}=}(-1)^{\at}P^{ab}(x)\,, \quad
    \{t,x^b\}_{\hat P}=-t\p_aP^{ab}(x)\,, \quad
    \{t,t\}_{\hat P}=0\,.
\end{equation*}
(Recall   $\{\ps,\ch\}_{\hat P}=\bigl[[\ps,\hat P],\ch\bigr]$. See~\cite{tv:graded} for general reference on even and odd brackets.)
\end{example}

(The statement of Corollary~\ref{corol} was  obtained  by A.~Biggs, which    led him to Theorem~\ref{classmult}, see~\cite{biggs:lifting}.)

\section{Final remarks. Analogy with the Nijenhuis bracket}%% maybe move to introduction

\subsection{`Invariant operators on geometric  quantities'}

Such operators have been studied and   classification theorems were obtained (see Kirillov~\cite{kirillov:invariant}). In particular, there is a complete classification of  invariant binary operations on tensor densities, at least in the case of ordinary manifolds (a theorem of P.~Grozman, see~\cite[Theorem 5]{kirillov:invariant}). Therefore one cannot expect to discover new unknown operations here. The principal interest of the Lie brackets on vector   or multivector densities introduced in the previous sections is that they arise as consequences of  Theorems~\ref{classdivfree} and \ref{classmult}.
These theorems describe in terms of the original (super)manifold the derivations or multivector fields for an algebra naturally associated with it, the algebra of densities $\Den(M)$. Such derivations or multivector fields possess   extra structures, the divergence operators $\div$ or $\d$, natural with respect to the manifold $M$.

The situation here is very similar to the classical results due to Nijenhuis and we shall briefly discuss this analogy.

\subsection{Nijenhuis's classification of derivations of $\O(M)$}
Consider the algebra of forms\,\footnote{In the super case, more precisely, we speak about pseudodifferential forms.} $\O(M)$ on a (super)manifold $M$. We may identify $\O(M)$ with the algebra of functions  $\fun(\Pi TM)$ on the antitangent bundle $\Pi TM$. Derivations of $\O(M)$ can be identified with the vector fields on the supermanifold $\Pi TM$. As local coordinates on $\Pi TM$ we may take $x^a,dx^a$ (here $dx^a$ are regarded as commuting variables of parity opposite to that of $x^a$). The general form of a vector field $X\in\Vect(\Pi TM)$ is
\begin{equation}\label{vectonpitm}
    \X=X^a_0(x,dx)\,\der{}{x^a}+X^a_1(x,dx)\,\der{}{dx^a}\,.
\end{equation}
In particular, one can write in this form the exterior differential,
\begin{equation*}
    d=dx^a\,\der{}{x^a}\,,
\end{equation*}
the interior product $i_X$ with a vector field $X$ on $M$, $X=X^a(x)\p_a$,
\begin{equation*}
    i_X=(-1)^{\Xt}X^a(x)\,\der{}{dx^a}\,,
\end{equation*}
and the Lie derivative $L_X$ w.r.t. such a vector field $X\in\Vect(M)$,
\begin{equation*}
    L_X=X^a(x)\,\der{}{x^a}+(-1)^{\Xt}dX^a(x) \,\der{}{dx^a}\,.
\end{equation*}
(The last expression can be obtained directly from considering the infinitesimal shift on $\Pi TM$ generated by $X$ or from the formula $L_X=[d,i_X]$.)

Note that, in this language, $d$ is a distinguished odd vector field on $\Pi TM$, with the property $[d,d]=2d^2=0$.

There is  a classical statement:
\begin{theorem}[Nijenhuis and Fr\"{o}licher--Nijenhuis]
Every derivation of $\O(M)$ (= vector field on $\Pi TM$) that commutes with $d$ is completely defined by its restriction on the subalgebra $\fun(M)\subset \O(M)$, which is an element of $\Vect(M,\O)$. Explicitly:
\begin{equation*}
     \X=X^a(x,dx)\der{}{x^a}+(-1)^{\Xt} dX^a(x,dx)\der{}{dx^a}\,,
\end{equation*}
where $X=X^a(x,dx)\p_a\in \Vect(M,\O)$\,.
\end{theorem}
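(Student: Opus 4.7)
\smallskip

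\noindent\textbf{Proof proposal.} The plan is to express the condition $[d,\X]=0$ in local coordinates and show that it uniquely determines the $\der{}{dx^a}$-component of $\X$ from the $\der{}{x^a}$-component, leaving the latter as free data which is exactly a form-valued vector field $X\in\Vect(M,\O)$.

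First I would take an arbitrary graded vector field $\X\in\Vect(\Pi TM)$ in the form~\eqref{vectonpitm}:
\begin{equation*}
    \X=X_0^a(x,dx)\,\der{}{x^a}+X_1^a(x,dx)\,\der{}{dx^a}\,,
\end{equation*}
and compute the graded commutator with the odd vector field $d=dx^b\lder{}{x^b}$. Using the super-Leibniz rule for $\lder{}{dx^a}$ against $dx^b$ (which produces $\delta^b_a$) and the relation $d^2=0$, a direct calculation shows that $[d,\X]$ is again a first-order operator on $\fun(\Pi TM)$, whose $\der{}{x^a}$-component vanishes identically, while its $\der{}{dx^a}$-component equals, up to a sign depending on the parity $\tilde\X$, the quantity $X_1^a - (-1)^{\tilde\X}\,d(X_0^a)$, where $d(X_0^a)$ denotes the action of the exterior differential on the form-valued coefficient $X_0^a(x,dx)$.

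Thus $[d,\X]=0$ is equivalent to the single relation
\begin{equation*}
    X_1^a=(-1)^{\tilde\X}\,d(X_0^a)\,,
\end{equation*}
which determines $X_1^a$ uniquely in terms of $X_0^a$, while $X_0^a(x,dx)$ remains an arbitrary form-valued function. The coefficients $X_0^a$ assemble into the form-valued vector field $X=X_0^a(x,dx)\,\p_a\in\Vect(M,\O)$, and the restriction of $\X$ to the subalgebra $\fun(M)\subset\O(M)$ recovers $X$, since on functions $f(x)$ the component $\lder{}{dx^a}$ acts as zero. Plugging this back gives the explicit formula of the theorem.

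For the converse, given any $X=X^a(x,dx)\,\p_a\in\Vect(M,\O)$, one defines $\X$ by the stated formula and verifies directly that $[d,\X]=0$ using $d^2=0$ applied to the coefficients $X^a(x,dx)$; this calculation is essentially dual to the forward direction. The main technical obstacle I anticipate is the bookkeeping of signs in the super setting: the parity of $\X$, the odd parity of $d$, and the opposite parity of $dx^a$ relative to $x^a$ all contribute, and the sign $(-1)^{\tilde\X}$ must be traced carefully through the graded Leibniz rule in order for the formulas to come out as written. Apart from this, the argument is essentially formal and follows the same pattern as the classification of divergence-free fields on $\hat M$ in Theorem~\ref{classder}, with $d$ playing the role of the canonical divergence and $\Pi TM$ playing the role of $\hat M$.
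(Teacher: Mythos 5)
Your overall strategy---compute $[d,\X]$ in the coordinates $x^a,dx^a$ and solve for the $\der{}{dx^a}$-coefficients in terms of the $\der{}{x^a}$-coefficients, which remain free data---is exactly the paper's, and your final relation $X^a_1=(-1)^{\tilde\X}\,dX^a_0$ is the correct one. But your description of what the ``direct calculation'' yields is wrong. Evaluating $[d,\X]$ on the coordinate functions gives $[d,\X](x^a)=d(\X(x^a))-(-1)^{\tilde\X}\X(dx^a)=dX^a_0-(-1)^{\tilde\X}X^a_1$ and $[d,\X](dx^a)=dX^a_1$, so
\begin{equation*}
[d,\X]=\left(dX^a_0-(-1)^{\tilde\X}X^a_1\right)\der{}{x^a}+dX^a_1\,\der{}{dx^a}\,.
\end{equation*}
Thus the $\der{}{x^a}$-component does \emph{not} vanish identically---it is precisely the component carrying the constraint---and the $\der{}{dx^a}$-component is $dX^a_1$, not $X^a_1-(-1)^{\tilde\X}dX^a_0$. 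You have the two components interchanged.

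This is not merely cosmetic: it hides a step. Correctly computed, $[d,\X]=0$ is a priori a \emph{system} of two equations, $dX^a_0-(-1)^{\tilde\X}X^a_1=0$ and $dX^a_1=0$, not the single relation you assert. One must observe that after solving the first for $X^a_1$, the second becomes $dX^a_1=\pm\,d^2X^a_0=0$ and so is automatic; this is the one place where $d^2=0$ enters (exactly parallel to the use of $\d_0^2=0$ in Theorem~\ref{classmult}). Your proposal never confronts the second equation because your miscomputation produced only one. With the components corrected and this redundancy noted, the argument closes and coincides with the paper's proof; the separate ``converse'' verification you propose is then unnecessary, since the coordinate computation already establishes the equivalence.
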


{\small
\begin{proof} Consider $[d,\X]$ where $\X$ is given by~\eqref{vectonpitm}. We obtain
\begin{equation*}
    [d,\X]=dX^a_0\,\der{}{x^a}+dX^a_1\,\der{}{dx^a}-(-1)^{\bXt}X^a_1\,\der{}{x^a}=
    \left(dX^a_0-(-1)^{\bXt}X^a_1\right)\der{}{x^a}+dX^a_1\,\der{}{dx^a}\,,
\end{equation*}
hence $[d,\X]=0$ is equivalent to the system $dX^a_0-(-1)^{\bXt}X^a_1=0$ and $dX^a_1=0$, which reduces to the single equation $X^a_1= (-1)^{\bXt}dX^a_0$ with no restriction on $X^a(x,dx)$.
\end{proof}

}

The derivations commuting with $d$ are interpreted as   `generalized Lie derivatives'. They form a  subalgebra in the Lie superalgebra of all vector fields on $\Pi TM$. The commutator of vector fields belonging to this subalgebra gives an isomorphic operation on the space of vector-valued forms $\Vect(M,\O)$. It is called the \emph{Nijenhuis bracket}. Explicitly, for $X=X^a(x,dx)\p_a)$ and $X=Y^a(x,dx)\p_a$, their Nijenhuis bracket is
\begin{equation*}
    [X,Y]=\left.[L_X,L_Y]\right|_{\fun(M)}
\end{equation*}
and
\begin{equation*}
    [X,Y]=\left(\bigl(X^a\der{}{x^a}+(-1)^{\Xt}dX^a\der{}{dx^a}\bigr)Y^b-(-1)^{\Xt\Yt}
    \bigl(Y^a\der{}{x^a}+(-1)^{\Yt}dY^a\der{}{dx^a}\bigr)X^b\right)\p_b
\end{equation*}

We can see a clear analogy with Theorems~\ref{classdivfree} and \ref{classmult}, and the constructions of brackets on $\Vect(M,\Den)$ and $\A(M,\Den)$\,.

\begin{remark}[History of bracket operations] Schouten defined a binary operation on multivector fields in~\cite{schouten:1940}. It has not been appreciated as a Lie-type bracket until Nijenhuis~\cite{nijen:jacobi} proved an analog of the Jacobi identity for it. This was a fundamental  example of what later became Lie superalgebras (another important example that arose about the same time was the Whitehead bracket on homotopy groups). In the same paper, Nijenhuis introduced his bracket for vector-valued forms (of form-valued vector fields). It was elaborated in Fr{\"o}licher--Nijenhuis~\cite{nijen:der1}. Of course, they were not using the supermanifold language. It should be noted that Nijenhuis was very close to the ideas of supermanifolds and supergroups. In particular, Nijenhuis and Richardson (see~\cite{nijen:deform} and subsequent papers), for the needs of algebraic deformation theory, introduced  pairs $(\mathfrak g, G_0)$ consisting of a Lie superalgebra $\mathfrak g=\mathfrak g_0\oplus \mathfrak g_1$ and a Lie group $G_0$ with the Lie algebra $\mathfrak g_0$, which served as an effective replacement of  (not yet known  {at that time})  Lie supergroups\,\footnote{Such pairs are sometimes referred to as `Harish-Chandra pairs' nowadays, following an analogy with  {different}  objects considered by Harish-Chandra in representation theory, but `Nijenhuis' or  `Nijenhuis--Richardson' pairs  would be  {a} much more historically justified name. (See more  {details on the
history} in~\cite{karabegov:knv}).}.

\end{remark}

%\bibliographystyle{plain}
%\bibliography{geometry}
%\end{document}

\end{document}